\documentclass[preprints,preprints,article,accept,moreauthors,pdftex]{Definitions/mdpi}
\usepackage[frozencache]{minted}
\usepackage{amsthm}
\usepackage{subfigure}
\usepackage{mathtools}
\usepackage[ruled,vlined,linesnumbered]{algorithm2e}

\newcommand{\TV}{\mathrm{TV}}
\newcommand{\IV}{\mathrm{IV}}

\newtheorem{corollary2}{Corollary}

\newtheorem{propos}{Proposition}

\firstpage{1}
\pubvolume{xx}
\issuenum{1}
\articlenumber{1}
\pubyear{2026}
\copyrightyear{2026}
\history{Received: May 2026}

\Title{A Practical Guide to Strip Caplet Volatilities}
\Author{Fabien {Le Floc'h}\orcidA{}}
\AuthorNames{Fabien {Le Floc'h}}
\address{}
\corres{Correspondence: fabien@2ipi.com}

\abstract{We study caplet stripping, the problem of recovering a caplet volatility term structure consistent with quoted cap volatilities. Many academic papers on the Libor market model assume caplet volatilities are readily available, whereas practitioners know they are not and extracting them is a complex task. This paper presents a practical workflow, structuring the presentation around a constructive algorithm. We start with criteria on the input data based on cap time-value monotonicity. If time values fail this check, we show how to correct the quotes using robust outlier detection based on the modified Z-score. The time-value proposition naturally leads to a direct non-bootstrap stripping approach by interpolating cap time values, which yields arbitrage-free caplet volatilities by construction. We then revisit the classic sequential bootstrap approach. We introduce compact-kernel transition interpolants (flat-linear and $C^1$ flat-smooth) that preserve bootstrap equivalence. Finally, for a richer, smoother curve, we introduce global search methods using midpoint node placement with positivity-preserving calibration. Pathological cases and detailed analyses of oscillations are provided in the appendix.}

\keyword{cap; floor; caplet; stripping; interpolation; time value; bootstrapping}

\begin{document}

\section{Introduction}
Interest rate caps and floors are widely used financial instruments that provide protection against interest rate fluctuations. A cap is a series of European call options on a floating rate (caplets), and a floor is the analogous series of put options. Caplet stripping is the process of extracting the implied volatilities of each individual caplet from market cap and floor prices, and is a standard building block for pricing, hedging, and model calibration in the interest rate derivatives market.

Caps are typically quoted in terms of implied volatility. Three conventions coexist in practice: the Black--Scholes (lognormal) volatility, which was the historical standard before negative rates became widespread; the Bachelier (normal) volatility, quoted in basis points; and the shifted lognormal volatility. Throughout this paper we work with Bachelier quotes, and stripping consists of finding the Bachelier caplet volatilities consistent with quoted cap prices. The observations we make carry over to the other conventions with minor modifications.

Let $\hat\sigma_q$ be the quoted Bachelier volatility for the cap of maturity $T_q$. The market cap price is the sum of caplet prices evaluated at the flat volatility $\hat\sigma_q$:
\begin{equation}\label{eq:cap_price}
P_q = \sum_{i=1}^{n_q} V_i(\hat\sigma_q),
\end{equation}
where $V_i(\sigma)$ is the Bachelier caplet price at vol~$\sigma$:
\begin{equation}\label{eq:caplet}
V_i(\sigma) = {B}_i^p\,\delta_i\bigl[\sigma\sqrt{t_i}\,\phi(d_i)
+ (F_i - K)\,\Phi(d_i)\bigr],
\qquad d_i = \frac{F_i - K}{\sigma\sqrt{t_i}}\,,
\end{equation}
with $\delta_i$ the year fraction, $F_i$ the forward rate, $K$ the strike, $t_i$ the time to expiry, $B_i^p$ the discount factor to payment date, $\Phi$ the standard normal cumulative distribution function, and $\phi$ its density. ${B}_i^p$ is the discount factor to payment date for the caplet with maturity $t_i$ (for a Libor rate, this is close to the discount factor to $t_{i+1}$, for an RFR such as SOFR, this is close to the discount factor to $t_i$).

As a concrete example, a standard 1Y USD cap on 3M LIBOR contains 3 caplets, with the first fixing in 3 months and settling in 6 months. The same cap on 3M SOFR (a backward-looking rate) contains 4 caplets, the first fixing and settling at roughly 3 months. Although backward-looking rates such as SOFR are increasingly prevalent, we restrict attention to forward-looking rates throughout; the modifications needed for backward-looking rates, including a linear decay of caplet volatilities during stripping \citep{piterbarg2020interest,piterbarg2020benchmark}, do not alter the main conclusions.

The goal of caplet stripping is to find individual caplet volatilities $\sigma_{i,K}$ consistent with the market prices of a set of caps at increasing maturities. Because caps are quoted at only a handful of maturities, the caplet volatility curve between quoted points is not determined by the market: it must be parameterised and interpolated. Many academic papers on the Libor market model assume caplet volatilities are readily available, whereas practitioners know they are not and extracting them is a complex task.  Different choices of interpolation and node placement can produce markedly different caplet curves, even when all of them reprice the quoted caps exactly \citep{white2014eight}.

We focus on exact or near-exact stripping. Rather than starting by emphasising what can go wrong, we organise the paper in the same order as a practical stripping workflow. We begin with criteria on the input data, checking the time values. When these criteria fail, we discuss simple repair procedures based on robust outlier detection. The time-value criterion then naturally leads to a direct stripping method based on interpolating cumulative cap time values. We next turn to the classical bootstrap approach, which remains the standard parametric alternative in practice, and finally to richer global approaches based on midpoint node placement and positivity-preserving interpolation. Pathological cases and more detailed failure analyses are deferred to the appendix.

\section{Input Data Criteria: Checking Time Values}

Before applying any stripping algorithm, one must verify the input data. There may be problematic market quotes, typically due to stale quotes.

\subsection{Arbitrage Check and Increasing Prices}
The cap prices must be increasing with the maturity of the cap, otherwise there is a calendar spread arbitrage. The difference between consecutive prices is a stream of caplets, which must have non-negative value. This translates to
\begin{equation}
P_{q+1}^{\mathrm{mkt}} \geq P_q^{\mathrm{mkt}}.
\end{equation}

\subsection{The Intrinsic--Time Value Decomposition}
A stricter criteria is to analyze the time value. Define the cumulative intrinsic value and time value at maturity $T_q$:
\begin{align}
\IV(T_q) &= \sum_{i=1}^{n_q}B_i^p\,\delta_i\,\max(F_i - K,\, 0),
\label{eq:intrinsic} \\[4pt]
\TV(T_q) &= P_q - \IV(T_q).
\label{eq:tv}
\end{align}

Each caplet's time value at vol $\sigma \geq 0$ is
\begin{equation}\label{eq:caplet_tv}
\TV_i(\sigma) = V_i(\sigma) - B_i^p\,\delta_i\max(F_i - K,\, 0).
\end{equation}
By put--call parity for each caplet,
\begin{equation}\label{eq:put_call}
V_i^{\mathrm{call}}(\sigma) = V_i^{\mathrm{put}}(\sigma)
+ B_i^p\delta_i\,(F_i - K),
\end{equation}
so
\[
\TV_i(\sigma) =
\begin{cases}
V_i^{\mathrm{put}}(\sigma) \geq 0 & \text{if } F_i \geq K
\;\;\text{(ITM caplet: TV = OTM put price)}, \\[3pt]
V_i^{\mathrm{call}}(\sigma) \geq 0 & \text{if } F_i < K
\;\;\text{(OTM caplet: TV = OTM call price)}.
\end{cases}
\]
In both cases $\TV_i(\sigma) \geq 0$ for any $\sigma \geq 0$. Thus $\TV(T_q) = \sum_{i=1}^{n_q} \TV_i(\bar\sigma_q) \geq 0$: the cumulative time value is non-negative.

\subsection{Monotonicity of Cumulative Time Value}

\begin{propos}[Incremental TV non-negativity]\label{prop:dtv}
If there exist caplet vols $\sigma_1, \ldots, \sigma_{n_q} \geq 0$ that reprice all caps simultaneously,
\[
\sum_{i=1}^{n_q} V_i(\sigma_i) = P_q, \qquad q = 1, \ldots, N,
\]
then for every $q \geq 2$,
\begin{equation}\label{eq:dtv_pos}
\Delta\TV_q \vcentcolon=
\bigl[P_q - P_{q-1}\bigr] - \bigl[\IV(T_q) - \IV(T_{q-1})\bigr]
\;\geq\; 0.
\end{equation}
\end{propos}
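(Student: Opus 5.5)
The plan is to telescope the repricing identities and then apply the caplet-wise non-negativity of time value established just before the statement. Since the same caplet vols $\sigma_1,\ldots,\sigma_{n_N}$ reprice every cap, and the caplets of cap $q-1$ are the first $n_{q-1}$ caplets of cap $q$ (the caps are nested, $n_{q-1}<n_q$), subtracting the identity for $q-1$ from the one for $q$ gives
\[
P_q - P_{q-1} = \sum_{i=n_{q-1}+1}^{n_q} V_i(\sigma_i).
\]
The key point is that the same caplet volatility $\sigma_i$ appears in both sums. The quoted flat volatilities $\hat\sigma_q$ and $\hat\sigma_{q-1}$ generally differ, which is exactly why the statement is phrased in terms of a common set of caplet vols.

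Next, I would use the definition \eqref{eq:intrinsic}. The difference of cumulative intrinsic values is likewise a sum over the new caplets only:
\[
\IV(T_q)-\IV(T_{q-1}) = \sum_{i=n_{q-1}+1}^{n_q} B_i^p\delta_i\max(F_i-K,0).
\]
Subtracting the two sums term by term and using \eqref{eq:caplet_tv}, I obtain
\[
\Delta\TV_q = \sum_{i=n_{q-1}+1}^{n_q} \TV_i(\sigma_i).
\]

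Finally, I would invoke the put--call parity argument \eqref{eq:put_call}. For $F_i\ge K$, $\TV_i(\sigma_i)$ equals the put price, and otherwise it equals the call price. Each is a non-negative Bachelier option value for $\sigma_i\ge 0$, since the Bachelier formula gives an expectation of a non-negative payoff times a positive discount factor, with $\sigma=0$ interpreted as the limiting intrinsic case. Hence every term is non-negative and $\Delta\TV_q\ge 0$.

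There is no real obstacle here. The only point needing care is to state explicitly that the caps share their first caplets, so that the telescoping is legitimate with a common $\sigma_i$. A secondary point is the degenerate case $\sigma_i=0$, where $d_i$ is undefined. I would handle it by taking $V_i(0)=B_i^p\delta_i\max(F_i-K,0)$ as the continuous limit, which gives $\TV_i(0)=0$.
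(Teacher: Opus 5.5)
Your proposal is correct and follows essentially the same route as the paper. Both difference the repricing identities to isolate the new caplets, then use $\TV_i(\sigma_i)\ge 0$ caplet by caplet to bound the incremental price below by the incremental intrinsic value; your remarks on the nesting of caps and the $\sigma_i=0$ limit are harmless refinements the paper leaves implicit.
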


\begin{proof}
Under the consistent caplet vol surface, the incremental cap price is the sum of the new caplet prices:
\[
P_q - P_{q-1} = \sum_{i=n_{q-1}+1}^{n_q} V_i(\sigma_i).
\]
Each caplet price satisfies $V_i(\sigma_i) \geq B_i^p\,\delta_i\,\max(F_i - K, 0)$ (since $\TV_i \geq 0$), so
\[
P_q - P_{q-1} \geq \sum_{i=n_{q-1}+1}^{n_q} B_i^p\,\delta_i\,\max(F_i - K, 0)
= \IV(T_q) - \IV(T_{q-1}).
\]
\end{proof}

\begin{corollary2}[Arbitrage detection]\label{cor:arb}
If $\Delta\TV_q < 0$ for some $q$, then no set of non-negative caplet vols consistent with the cap quotes exists. The market quotes contain a static arbitrage.
\end{corollary2}

Table~\ref{tab:combined} illustrates an actual market dataset for on a 1M forward looking rate (Libor 1M as of February 2022) (full inputs in Appendix~\ref{app:market_data}). At 4M and 24M, the incremental time value $\Delta \TV$ is negative for $K=0$ bp. In both cases the incremental cap price falls below the incremental intrinsic: the cap vol quote implies a negative optionality premium on the new caplets.

\begin{table}[h]
  \caption{Cumulative TV and incremental $\Delta\TV$ for $K=0$ and $K=200$~bp (truncated for brevity). Negative entries indicate arbitrageable quotes.}
\label{tab:combined}
\centering
\small
\setlength{\tabcolsep}{4pt}
\begin{tabular}{rr rr>{\bfseries}r rr r}
\toprule
& & \multicolumn{3}{c}{$K = 0$~bp} & \multicolumn{3}{c}{$K = 200$~bp} \\[2pt]
\cmidrule(lr){3-5} \cmidrule(lr){6-8}
$T$ (M) & $\bar\sigma$ (bp) & $\TV$ & $\Delta\IV$ & $\Delta\TV$
& $\TV$ & $\Delta\IV$ & $\Delta\TV$ \\
\midrule
2 & 79.30 & 0.019 & 3.708 & \mdseries +0.019 & 0.000 & 0.000 & +0.000 \\
3 & 111.83 & 0.181 & 6.288 & \mdseries +0.162 & 0.004 & 0.000 & +0.004 \\
4 & 93.10 & 0.082 & 9.530 & $-$0.099 & 0.051 & 0.000 & +0.047 \\
12 & 99.86 & 1.001 & 35.898 & \mdseries +0.476 & 6.831 & 0.000 & +3.590 \\
24 & 60.42 & 0.102 & 194.588 & $-$0.899 & 20.920 & 9.827 & +14.089 \\
36 & 109.15 &  14.029 &  201.510 & \mdseries +13.926 &  110.679 &   9.258 &  +89.759 \\
60 &  96.34 &  36.868 &  366.519 & \mdseries +22.839 &  233.909 &   0.000 & +123.230 \\
\bottomrule
\end{tabular}
\end{table}

At $K=200$ bp, most caplets are out-of-the-money, and the quotes do not violate the no-arbitrage conditions. Nevertheless, stripping these quotes still yields an irregular caplet volatility curve because the underlying 3M and 24M quotes are fundamentally of poor quality. Similarly, if we apply these exact quotes to a hypothetical at-the-money scenario by artificially setting the forwards to $F_i = K = 0$ bp, the data again passes the arbitrage check but produces a distorted volatility term structure. This demonstrates that the absence of static arbitrage is a necessary but insufficient condition for generating a realistic caplet curve.
\section{Correcting Input Data}
When time values fail, we must identify which quote is the true outlier. For example, if a 4M quote fails, we do not know exactly if the problematic quote is 3M or 4M (resp. 12M or 24M). This is where the modified Z-score approach is helpful. 

\subsection{Modified Z-Score}\label{sec:mad}
Given a series of observations $x_1, \ldots, x_n$, the \emph{median absolute deviation} (MAD) is
\[
\mathrm{MAD} = \operatorname{median}\bigl(|x_i - \tilde{x}|\bigr),
\qquad \tilde{x} = \operatorname{median}(x_i).
\]
Unlike the standard deviation, the MAD has a breakdown point of~50\%: up to half the observations can be arbitrarily corrupted without affecting it. To make the MAD comparable to the standard deviation for normally distributed data, \citet{hampel1974influence} introduced the consistency factor $c = 1/\Phi^{-1}(3/4) \approx 1.4826$, so that $\hat\sigma = c \cdot \mathrm{MAD}$ is a consistent estimator of~$\sigma$.

The \emph{modified Z-score} of \citet{iglewicz1993volume} is defined as
\begin{equation}\label{eq:modz}
M_i = \frac{0.6745\,(x_i - \tilde{x})}{\mathrm{MAD}},
\end{equation}
where $0.6745 = 1/c = \Phi^{-1}(3/4)$. This normalises the score so that
$|M_i| \approx |z_i|$ when the data are Gaussian.

An observation~$x_i$ is flagged as an outlier when
\[
  |M_i| > k.
\]
\citet{iglewicz1993volume} recommend $k = 3.5$ as a conservative threshold.
The choice $k = 3$ is also widely used and corresponds (for Gaussian
data) to roughly the same false-positive rate as a $3\sigma$ rule
applied with the classical mean and standard deviation.

\subsection{Application to Cap Vol Stripping}
In the cap vol stripping context, we compute residuals of each cap vol quote relative to a local median (e.g.\ a rolling window of five neighbouring quotes):
\[
r_q = \hat{\sigma}_q - \operatorname{median}
\bigl(\hat{\sigma}_{\max(1,q-2)}, \ldots, \hat\sigma_{\min(N,q+2)}\bigr).
\]
The modified Z-score~\eqref{eq:modz} is then applied to the residual
series~$r_q$ (see Algorithm \ref{code:outlier} for a \emph{Julia} implementation).  Quotes with $|M_q| > 3$ are excluded before stripping,
and the remaining ``clean'' quotes are used to bootstrap caplet vols.
Running this logic on our dataset (Libor 1M forward looking rate as of February 2022 given in Appendix~\ref{app:market_data}), we find 2 outliers corresponding to $T_2 = 3\text{M}$ and $T_8=2\text{Y}$. 
\tablesize{\small}
\begin{table}[h]
\centering
\caption{Cap Maturities, Cap Vols (bp), and Modified Z-scores up to 7Y.}
\label{tab:cap_vols_outliers} 
\begin{tabular}{lrrrrrrrrrrrrr}
\toprule
Maturity & {2M} & \textbf{3M} & {4M} & {5M} & {6M} & {9M} & {12M} & \textbf{24M} & {36M} & {60M} & {84M}  \\
%\midrule
Cap Vol (bp) & 79.30 & \textbf{111.83} & 93.10 & 91.83 & 94.51 & 102.25 & 99.86 & \textbf{60.42} & 109.15 & 96.34 & 84.37 \\
$M_i$ & -2.49 & \textbf{3.50} & 0.00 & -0.48 & 0.00 & 1.40 & 0.00 & \textbf{-7.12} & 2.31 & 2.16 & 0.00 \\
\bottomrule
\end{tabular}
\end{table}

\section{Stripping Without Bootstrapping}
The time-value criterion naturally suggests a direct stripping approach based on interpolating cumulative cap time values. This is arguably the most immediate constructive consequence of Proposition~\ref{prop:dtv}. 

Rather than interpolating a parametric caplet vol curve and solving node values iteratively, we may convert cap vol quotes to cap prices $P_q$, interpolate those with a monotone interpolant $\hat{P}(t)$ prepending $(0, 0)$ to $(T_q, P_q)$, and recover caplet prices by differencing $V_i = \max(\hat{P}_i,\hat{P}_{i-1}, 0)$ and inverse caplet prices through a standard\footnote{For the Bachelier model, F. Le Floc'h gives a close to machine epsilon rational function approximation and thus no solver is needed. For the Black model, both F. Le Floc'h and P. J\"ackel provide a very accurate, robust and fast numerical algorithm to imply volatilities.} implied volatility solver \citep{lefloch2016fast,lefloch2026explicit, lefloch2026monotone, jackel2015let}.

Interpolated cap prices will still be increasing, but this is not enough, in particular, there is no guarantee that the difference of cap prices will be greater than the corresponding difference in intrinsic value, violating Proposition \ref{prop:dtv}. In practice, this often occurs.
The solution is to use a monotone interpolation of the \emph{time value} of the cap rather than its total price as in Algorithm \ref{alg:price_strip}. Then Proposition \ref{prop:dtv} will hold by construction and the resulting caplet volatilities will be arbitrage-free.

\begin{algorithm}[ht]
\caption{Price-based caplet vol stripping (time-value mode)}
\label{alg:price_strip}

\KwIn{
Sorted cap maturities $T_1 < \cdots < T_N$,
flat vols $\hat\sigma_1,\ldots,\hat\sigma_N$,
forwards $\{F_i\}$,
discount factors $\{B_i^p\}$,
strike $K$,
interpolator $\mathcal{I}$.
}
\KwOut{
Caplet vols $\{\sigma_i\}$ for $i=1,\ldots,n_{\max}$.
}

\BlankLine
\textbf{Step 1: Compute market cap prices}

\For{$q = 1,\ldots,N$}{
$P(T_q) \gets \sum_{i=1}^{n_q} V_i(\hat\sigma_q)$\;
}
\BlankLine
\textbf{Step 2: Arbitrage filter} (optional)

Compute $\TV(T_q) = P(T_q) - \IV(T_q)$\;
Iteratively remove quotes where $\TV(T_q) \leq \TV(T_{q-1})$ choosing the quote farther from a linear interpolation of its neighbours\;

\BlankLine
\textbf{Step 3: Build time-value nodes}

Let $\{(t_j, v_j)\}$ be the TV nodes: prepend $(0,0)$, then $(T_q,\TV(T_q))$ for remaining quotes\;

\BlankLine
\textbf{Step 4: Interpolate time value on caplet grid}

\For{$i = 1,\ldots,n_{\max}$}{
$\widehat{\TV}(t_i) \gets \mathcal{I}(\{(t_j,v_j)\}, t_i)$\;
Enforce monotonicity: $\widehat{\TV}(t_i) \gets \max\!\left(\widehat{\TV}(t_i), \widehat{\TV}(t_{i-1})\right)$
}

\BlankLine
\textbf{Step 5: Recover caplet prices}

\For{$i = 1,\ldots,n_{\max}$}{
$\Delta \TV_i \gets \max\!\left(\widehat{\TV}(t_i) - \widehat{\TV}(t_{i-1}), 0\right)$\;
$V_i \gets \Delta \TV_i + B_i^p \delta_i \max(F_i-K,0)$\;
}

\BlankLine
\textbf{Step 6: Invert for caplet vol}

\For{$i = 1,\ldots,n_{\max}$}{
$\sigma_i \gets \mathrm{BachelierImpliedVol}(V_i,T_i,\delta_i,B_i^p,F_i,K)$\;
}
\end{algorithm}

In Figure \ref{fig:price_vs_tv_interp}, we look at two monotone interpolators: the piecewise-linear interpolator and the $C^2$ cubic spline with monotonicity filter of \citet{dougherty1989nonnegativity} on the market quotes for the 1M Libor caps presented in Appendix~\ref{app:market_data}, where the outliers have been filtered out. The resulting caplet volatilities are very similar, with the monotone spline producing a smoother curve. In contrast, if we interpolate cap prices instead of time values, we get a non-monotone caplet curve with unrealistic oscillations (Figure \ref{fig:price_vs_tv_interp}). The Cap price is increasing with maturity, its derivative is always positive (Figure \ref{fig:cap_price_tv_K0_vs_K200}).
\begin{figure}[H]
\centering
\includegraphics[width=0.95\textwidth]{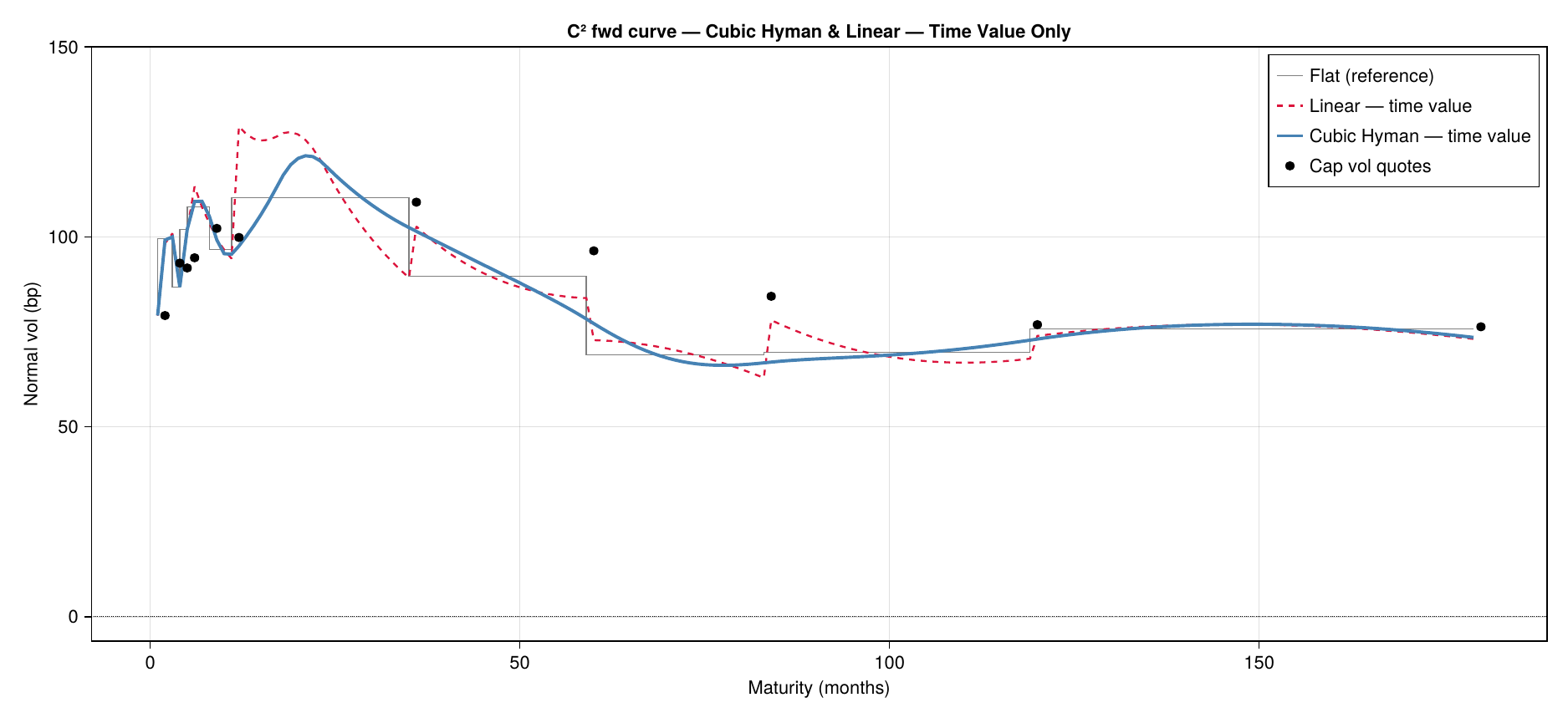}
\caption{Caplet volatility with time-value interpolation using smooth interest rate curves. \label{fig:price_vs_tv_interp}}
\end{figure}
\begin{figure}[H]
\centering
\includegraphics[width=0.95\textwidth]{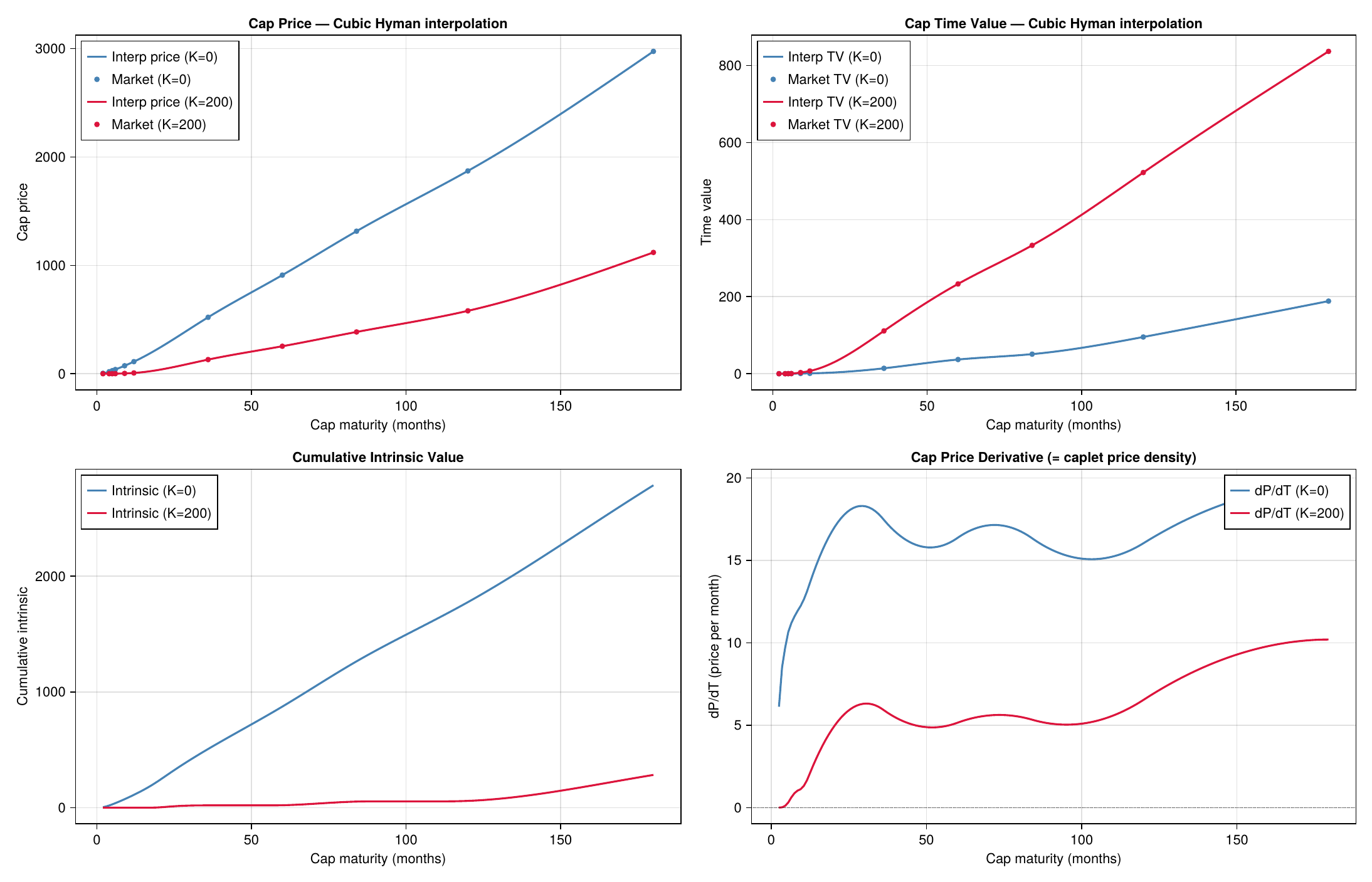}
\caption{Cap prices for different strikes, using the same arbitrage-free cap vol quotes (outliers removed). \label{fig:cap_price_tv_K0_vs_K200}}
\end{figure}
A primary drawback of this stripping approach is that the smoothness of the resulting caplet volatilities relies heavily on the smoothness of the forward curve (see Appendix \ref{sec:forward_curve} for a counterexample). To account for this in Figure \ref{fig:price_vs_tv_interp}, we deliberately applied a smooth interpolation to the forward rates rather than a flat-forward method. This may not be such a bad requirement in practice, since the cap volatilities are likely quoted assuming a smooth forward curve --- a pricing curve as opposed to a risk curve which typically uses log-linear interpolation. A risk curve would result in different forward rates and thus different caplet volatilities, regardless of the stripping technique. 

A second limitation involves handling extrapolation beyond the final cap maturity, which requires introducing artificial quotes at longer maturities. For instance, mimicking a flat caplet volatility extrapolation might involve adding quotes at every tenor to force a constant cap volatility, though this fails to remain truly flat between tenors. Alternatively, one could add a single distant quote (e.g., at 50 years) matching the last quoted cap volatility (see Appendix \ref{sec:extrapolation}); the choice of date will impact the interpolation up to the last quoted tenor. Ultimately, neither of these workarounds is entirely satisfactory.

\section{The Classic Bootstrap}
A different route, and the one historically used most often in practice, is to parameterise the caplet volatility term structure itself and solve sequentially for node values. We have $N$ cap vol quotes $(\hat\sigma_1,\dots,\hat\sigma_N)$ at maturities $T_1<\cdots<T_N$. The incremental constraint (cap~$k$ minus cap~$k{-}1$) pins down the newly added caplets, leading to a simple lower triangular system. This sequential algorithm is formalized in Algorithm \ref{alg:bootstrap_strip}. 
\begin{algorithm}[ht]
\caption{Sequential bootstrap caplet stripping}
\label{alg:bootstrap_strip}

\KwIn{
Sorted cap maturities $T_1 < \cdots < T_N$,
quoted flat cap vols $\hat{\sigma}_1,\ldots,\hat{\sigma}_N$,
forwards $\{F_i\}$,
discount factors $\{B_i^p\}$,
accrual factors $\{\delta_i\}$,
strike $K$,
caplet tenor $\Delta$,
node placement rule $\{\tau_q\}$,
interpolation rule $\sigma(\cdot; v_1,\dots,v_q)$.
}

\KwOut{
Bootstrap node values $v_1,\dots,v_N$ and stripped caplet vols $\sigma_i=\sigma(t_i)$.
}

\BlankLine
\textbf{Step 1: Compute market cap prices}

\For{$q=1,\ldots,N$}{
$P_q \gets \sum_{i=1}^{n_q} V_i(\hat{\sigma}_q)$\;
}

\BlankLine
\textbf{Step 2: Choose bootstrap nodes}

For a forward-looking rate, set $\tau_q = T_q-\Delta$;\;
For a backward-looking rate, set $\tau_q = T_q$\;
Choose an interpolation family, for example piecewise-constant, flat-linear, or flat-smooth\;

\BlankLine
\textbf{Step 3: Bootstrap sequentially}

\For{$q=1,\ldots,N$}{
Treat $v_1,\dots,v_{q-1}$ as fixed\;
Solve for $v_q$ so that the model cap price matches the market cap price:
\[
\sum_{i=1}^{n_q} V_i\bigl(\sigma(t_i; v_1,\dots,v_q)\bigr)=P_q
\]
or equivalently, for $q\ge 2$, solve the incremental equation
\[
\sum_{i=n_{q-1}+1}^{n_q} V_i\bigl(\sigma(t_i; v_1,\dots,v_q)\bigr)
=
P_q-P_{q-1}
\]
using a one-dimensional root finder\;
}

\BlankLine
\textbf{Step 4: Recover caplet volatilities}

\For{$i=1,\ldots,n_{\max}$}{
$\sigma_i \gets \sigma(t_i; v_1,\dots,v_N)$\;
}
\end{algorithm}
The parameterization of the caplet vol curve becomes independent of the interpolation method used for the interest rate curve: it will be smooth as long as the interpolation of the caplet vol curve is smooth, even if the forward curve has jumps.

Let $\Delta$ be the  caplet frequency (e.g.\ $\Delta = 1$ month), a cap on a \emph{Forward-looking} rate (e.g.\ LIBOR) of maturity~$T_k$
            contains caplets fixing at
            $\Delta,\, 2\Delta,\, \ldots,\, T_k - \Delta$.
            The last caplet fixes one period before~$T_k$ and pays at~$T_k$.
      A cap on a \emph{Backward-looking} rate (e.g.\ SOFR) of maturity~$T_k$
            contains caplets fixing at
            $\Delta,\, 2\Delta,\, \ldots,\, T_k$.
    
 Let $\tau_k$ be the node times, for forward-looking rates the node is placed at
        the last fixing time of each cap, $\tau_k = T_k - \Delta$ while for
        backward-looking rates,  $\tau_k = T_k$.
The piecewise-constant interpolation is left-continuous on $\tau_k$ and reads 
\begin{equation*}\sigma(t) = \sum_{q=1}^{M_q} \sigma_{q,K} 1_{t \in (\tau_{q-1}, \tau_q]} + \sigma_{M_q,K} 1_{t \in [\tau_{M_q}, \infty)}\,,\end{equation*}
with $\tau_0 = 0$.

A classic bootstrap with piecewise-constant interpolation leads to Figure \ref{fig:caplet_vols_flat}.
\begin{figure}[ht]
\centering
\includegraphics[width=0.95\textwidth]{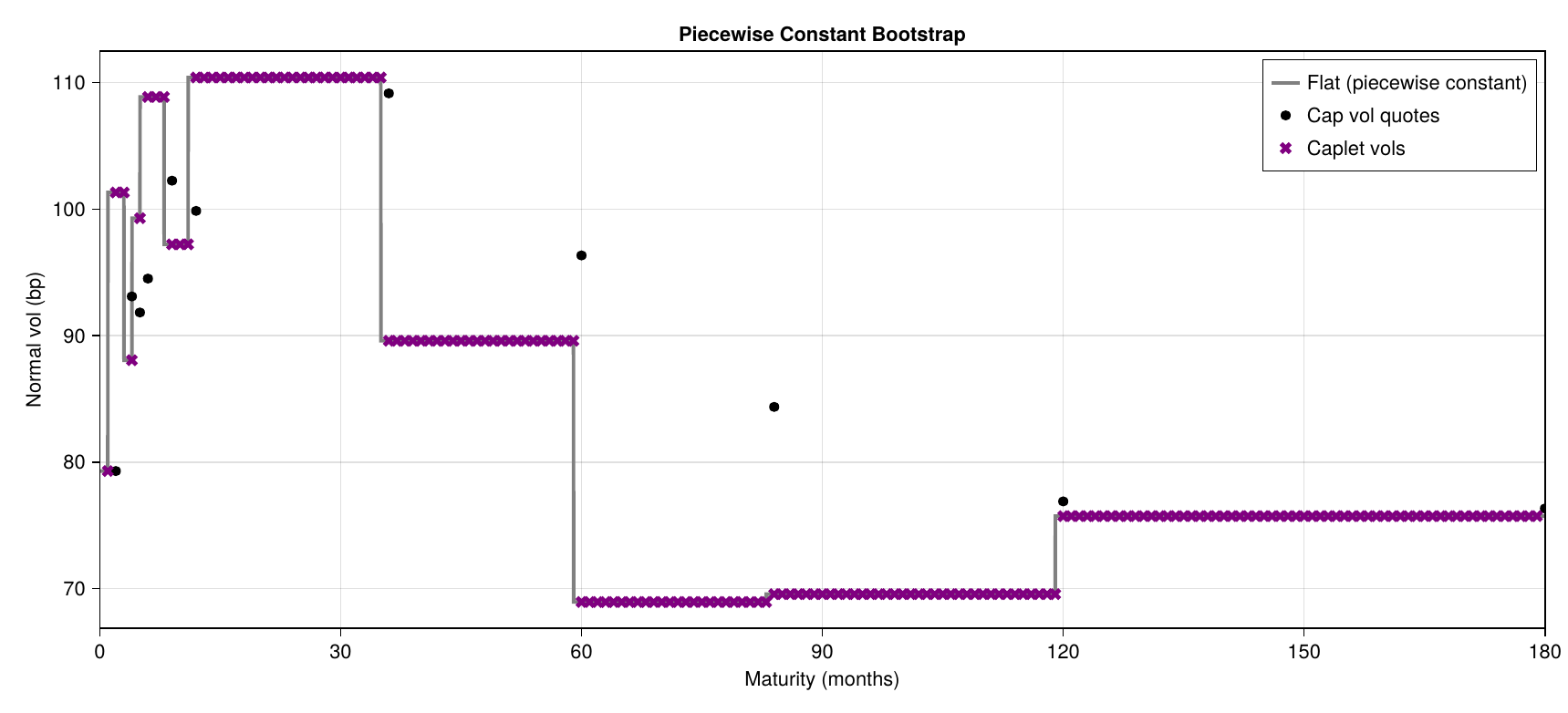}
\caption{Caplet volatilities obtained with bootstrapping and piecewise-constant interpolation. The cap vols appear shifted by 1M because a cap maturity of 3M actually has its last fixing at 2M for a forward looking rate with 1M tenor.\label{fig:caplet_vols_flat}}
\end{figure}
The standard piecewise-constant (step-forward) interpolation of the caplet
vol term structure produces sharp discontinuities at every node boundary.
While the bootstrap is exact and sequential, the resulting vol curve is
visually unappealing and can cause numerical issues in applications that
evaluate the curve at non-caplet times (e.g.\ intra-month pricing or Theta like risk).

\subsection{What would a child do?}

The piecewise-constant interpolation resulting from the standard sequential approach is left-continuous on $\tau_k$. Many libraries and quants have historically used this stepwise-constant interpolation with little real justification. Ironically, a young child, when asked to connect the dots in Figure \ref{fig:caplet_vol_scatter}, would naturally\footnote{Confirmed with my own children.} draw continuous lines between them. This childish instinct is in fact more appropriate, because a continuous transition avoids unrealistic jumps in the volatility curve and is much smoother than the quant's discontinuous leaps.
\begin{figure}[ht] 
\centering
\includegraphics[width=0.95\textwidth]{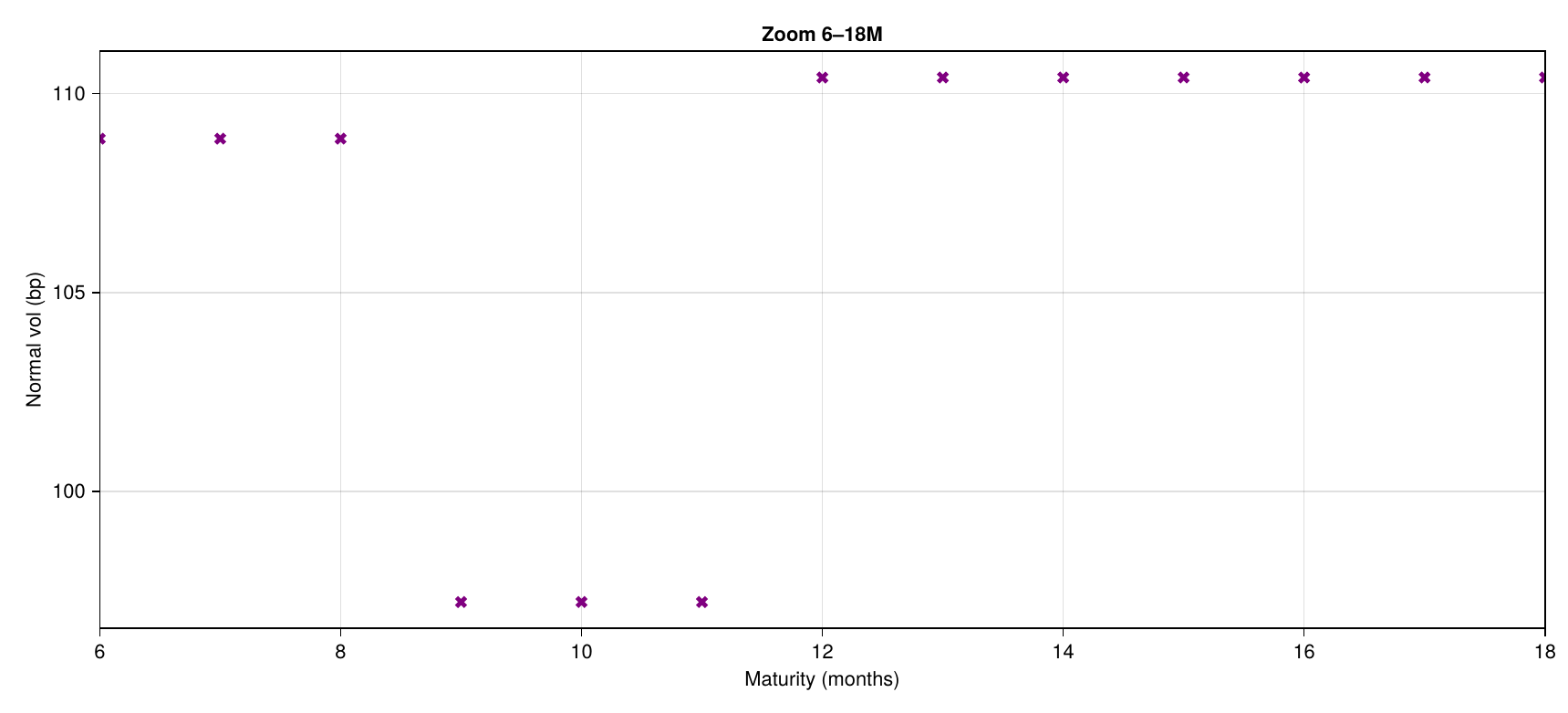}
\caption{Discrete caplet volatilities obtained from the sequential bootstrap, representing the nodes that must be interpolated.}
\label{fig:caplet_vol_scatter}
\end{figure}

We can formalise this continuous intuition by introducing a \emph{flat-linear} interpolation. We introduce a ramp factor $\beta \in [0, 1]$. Let $\bar\beta= \beta\Delta$ be the ramp width. Let $v_q = \sigma_{q,K}$ be the node value. The flat-linear interpolant $\sigma(t)$ uses a ramp centred at $c_k = \tau_{k-1} + \tfrac{\Delta}{2}$ with boundaries $a_k$ and $b_k$:
\begin{equation}\label{eq:flat-linear}
\sigma(t) =
\begin{cases}
v_{k-1},
& \tau_{k-1} < t \leq a_k, \\[4pt]
\displaystyle v_{k-1} + \frac{t - a_k}{b_k - a_k}\,
\bigl(v_k - v_{k-1}\bigr),
& a_k < t \leq b_k, \\[8pt]
v_k,
& b_k < t < \tau_k,
\end{cases}
\end{equation}
and \[
  \sigma(t) = v_1 \quad\text{for } t \leq \tau_1,
  \qquad
  \sigma(t) = v_N \quad\text{for } t \geq \tau_N.
\]
with
\[
  a_k = \max\bigl(\tau_{k-1},\; c_k - \tfrac{\bar\beta}{2}\bigr),
  \qquad
  b_k = \min\bigl(\tau_k,\; c_k + \tfrac{\bar\beta}{2}\bigr).
\]
At the nodes themselves the step-forward convention gives
$\sigma(\tau_k) = v_k$. 
Figure \ref{fig:flat_vs_flat_linear_zoom} shows the impact of the ramp width $\beta$ on the caplet volatilities.

\begin{figure}[ht]
\centering
\includegraphics[width=0.95\textwidth]{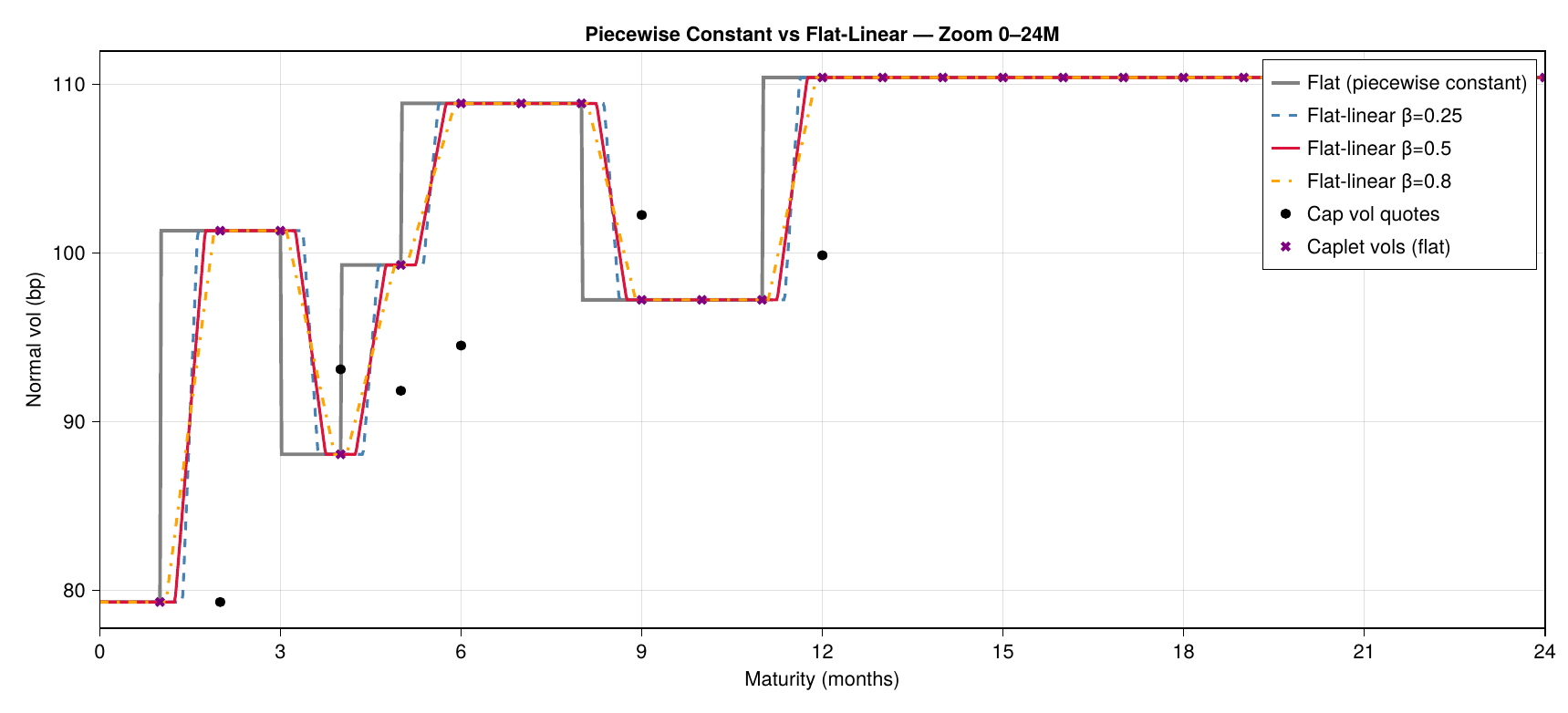}
\caption{Caplet volatilities for maturities below 24M with flat-linear interpolation and various ramp width $\beta$. \label{fig:flat_vs_flat_linear_zoom}}
\end{figure}

\subsection{What might a draughtsman do?}
The \emph{flat-smooth} interpolant replaces the linear ramp with the cubic Hermite smoothstep $\Psi(s) = 3s^2 - 2s^3$:
\begin{equation}\label{eq:flat-smooth}
\sigma(t) =
\begin{cases}
v_{k-1},
& \tau_{k-1} < t \leq a_k, \\[4pt]
\displaystyle v_{k-1} + \Psi\left(\frac{t - a_k}{b_k - a_k}\right)\,
\bigl(v_k - v_{k-1}\bigr),
& a_k < t \leq b_k, \\[8pt]
v_k,
& b_k < t < \tau_k,
\end{cases}
\end{equation}
Because $\Psi'(0) = \Psi'(1) = 0$, the resulting curve is $C^1$ (no slope discontinuities) while preserving bootstrap equivalence for $\beta \leq 1$. 

\begin{figure}[ht]
\centering
\includegraphics[width=0.95\textwidth]{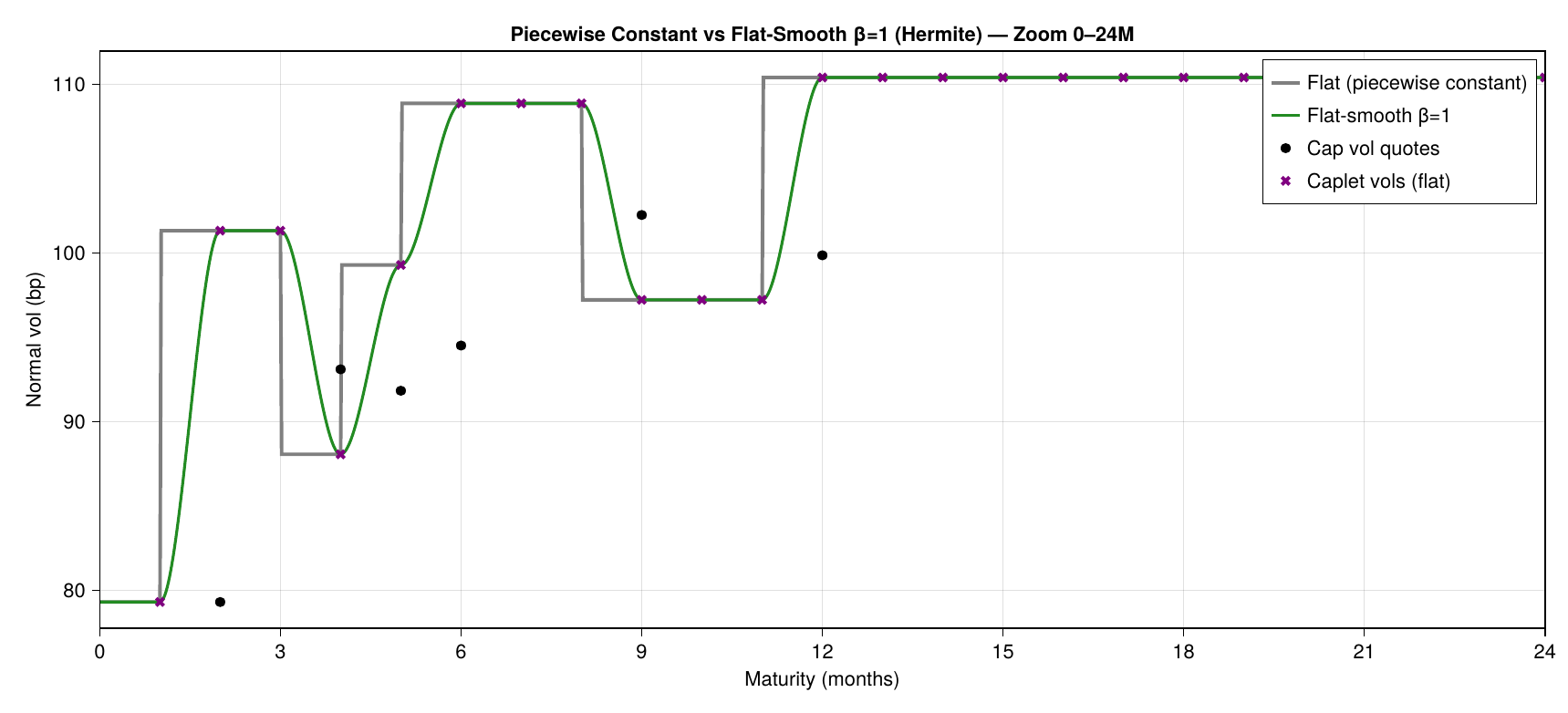}
\caption{Caplet volatilities for maturities below 24M with flat-smooth interpolation with $\beta = 1$. \label{fig:flat_vs_flat_smooth_zoom}}
\end{figure}

\begin{propos}
If $\beta \leq 1$, the flat-linear (or flat-smooth) interpolant evaluates to the same value as the piecewise-constant interpolant at every caplet fixing date. Consequently, sequential bootstrap produces identical node values.
\end{propos}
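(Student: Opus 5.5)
The argument has two parts: a pointwise identity at every caplet fixing date, then an induction over the bootstrap steps. The caplet fixing dates lie on the grid $t_i = i\Delta$, and every node $\tau_k$ is itself a grid point: $\tau_k = T_k - \Delta$ for forward-looking rates and $\tau_k = T_k$ for backward-looking rates, with each $T_k$ a multiple of $\Delta$. Hence each interval $(\tau_{k-1},\tau_k]$ contains exactly the fixing dates $\tau_{k-1}+\Delta, \tau_{k-1}+2\Delta, \ldots, \tau_k$. On this interval the piecewise-constant interpolant takes the value $v_k$. So the task is to show that the flat-linear and flat-smooth interpolants also equal $v_k$ at each of these grid points, and that both equal $v_1$ for $t \le \tau_1$ and $v_N$ for $t \ge \tau_N$. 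The last two claims hold by definition.

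For the pointwise step, I would use the ramp geometry. The ramp is centred at $c_k = \tau_{k-1} + \Delta/2$ and has half-width $\bar\beta/2 = \beta\Delta/2 \le \Delta/2$ when $\beta \le 1$. It follows that
\[
\tau_{k-1} \le a_k \le b_k \le \tau_{k-1} + \Delta \le \tau_k .
\]
So the whole ramp, where the interpolant differs from both $v_{k-1}$ and $v_k$, lies inside $[\tau_{k-1}, \tau_{k-1}+\Delta]$. This interval contains no fixing date in its interior. I would then split into three cases.
\begin{itemize}
\item For grid points $t$ with $\tau_{k-1}+\Delta \le t < \tau_k$, we have $t \ge b_k$. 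We are then in the flat branch of \eqref{eq:flat-linear} (or \eqref{eq:flat-smooth}), provided $t > b_k$ strictly.
\item At the right endpoint $t = b_k = \tau_{k-1}+\Delta$, which can occur when $\beta = 1$, the ramp branch applies. It gives $v_{k-1} + 1\cdot(v_k - v_{k-1}) = v_k$, because the linear ramp equals $1$ at $s=1$ and $\Psi(1)=1$.
\item At $t = \tau_k$ itself, the stated convention $\sigma(\tau_k) = v_k$ applies.
\end{itemize}
The left endpoint $\tau_{k-1}$ is not a fixing date of the $k$-th block. It is governed by the convention $\sigma(\tau_{k-1}) = v_{k-1}$, which matches the left-continuous step function there. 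I expect this endpoint bookkeeping to be the only delicate point: mainly the boundary case $\beta = 1$, where $b_k$ coincides with a grid point, and the degenerate case $\tau_k - \tau_{k-1} = \Delta$, where $b_k = \tau_k$.

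For bootstrap equivalence, I would argue by induction on $q$. At step $q$ of Algorithm~\ref{alg:bootstrap_strip}, the cap price $\sum_{i\le n_q} V_i(\sigma(t_i; v_1,\ldots,v_q))$ depends on the interpolant only through its values at the fixing dates $t_i \le \tau_q$. By the pointwise step, these values coincide with the piecewise-constant values for any given node vector. Each $t_i \le \tau_q$ lies in some $(\tau_{k-1},\tau_k]$ with $k\le q$, so the values involve only $v_1,\ldots,v_q$. The one-dimensional equation for $v_q$ is therefore the same function of $v_q$ in both schemes, given identical $v_1,\ldots,v_{q-1}$ by the induction hypothesis. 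Each caplet price is strictly increasing in $\sigma$, so this equation has a unique root, and the solved node values are identical.
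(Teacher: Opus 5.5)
Your proof is correct and follows the same route as the paper's: fixing dates are multiples of $\Delta$, so for $\beta \le 1$ the ramp is confined to $[\tau_{k-1}, \tau_{k-1}+\Delta]$. Every fixing date in $(\tau_{k-1},\tau_k)$ therefore lies at or beyond $b_k$, and the node dates are handled by the step-forward convention. You are more careful than the paper's three-line argument: you check the boundary case $t=b_k$ when $\beta=1$, where the ramp branch still returns $v_k$, and you spell out the induction behind ``consequently'' for the bootstrap node values.
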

\begin{proof}
Caplet fixing times are integer multiples of~$\Delta$. The ramp centre is $c_k = \tau_{k-1} + \Delta/2$. For any caplet fixing time $t$ falling inside $(\tau_{k-1}, \tau_k)$, $t \geq b_k$ and $\sigma(t) = v_k$. At node times the step-forward convention gives $\sigma(t) = v_k$ directly.
\end{proof}

\subsection{Generalisation: Compact Kernel Transitions}
The flat-linear ramp is equivalent to convolving each step-function jump
with a rectangular kernel of width~$\bar\beta$.  This generalises to any
compact-support kernel~$K$ with
$\int_0^1 K(s)\,ds = 1$.  Define the transition function
\[
  \Psi(s) = \int_0^s K(u)\,du, \qquad s \in [0, 1],
\]
so that $\Psi(0) = 0$ and $\Psi(1) = 1$.  The interpolant becomes
\[
  \sigma(t) = v_1 + \sum_{k \geq 2} (v_k - v_{k-1})\,
    \Psi\left(\frac{t - a_k}{b_k - a_k}\right),
\]
with $\Psi$ clamped to $[0,1]$ outside $[0,1]$.

The flat-linear scheme uses $K \equiv 1$ (rectangular) giving
$\Psi(s) = s$.  The flat-smooth scheme uses the Hermite smoothstep kernel.
\tablesize{\small}
\begin{center}
\begin{tabular}{lll}
  \toprule
  \textbf{Kernel} & \textbf{Transition $\Psi(s)$} & \textbf{Smoothness} \\
  \midrule
  Rectangular ($K = 1$) & $s$ & $C^0$ (flat-linear) \\
  Hermite smoothstep & $3s^2 - 2s^3$ & $C^1$, $\Psi'(0) = \Psi'(1) = 0$ \\
  Cosine (Hann) & $\tfrac{1}{2}(1 - \cos \pi s)$ & $C^1$ \\
  Quintic smootherstep & $6s^5 - 15s^4 + 10s^3$ & $C^2$\\
  \bottomrule
\end{tabular}
\end{center}

\begin{figure}[ht]
\centering
\includegraphics[width=0.95\textwidth]{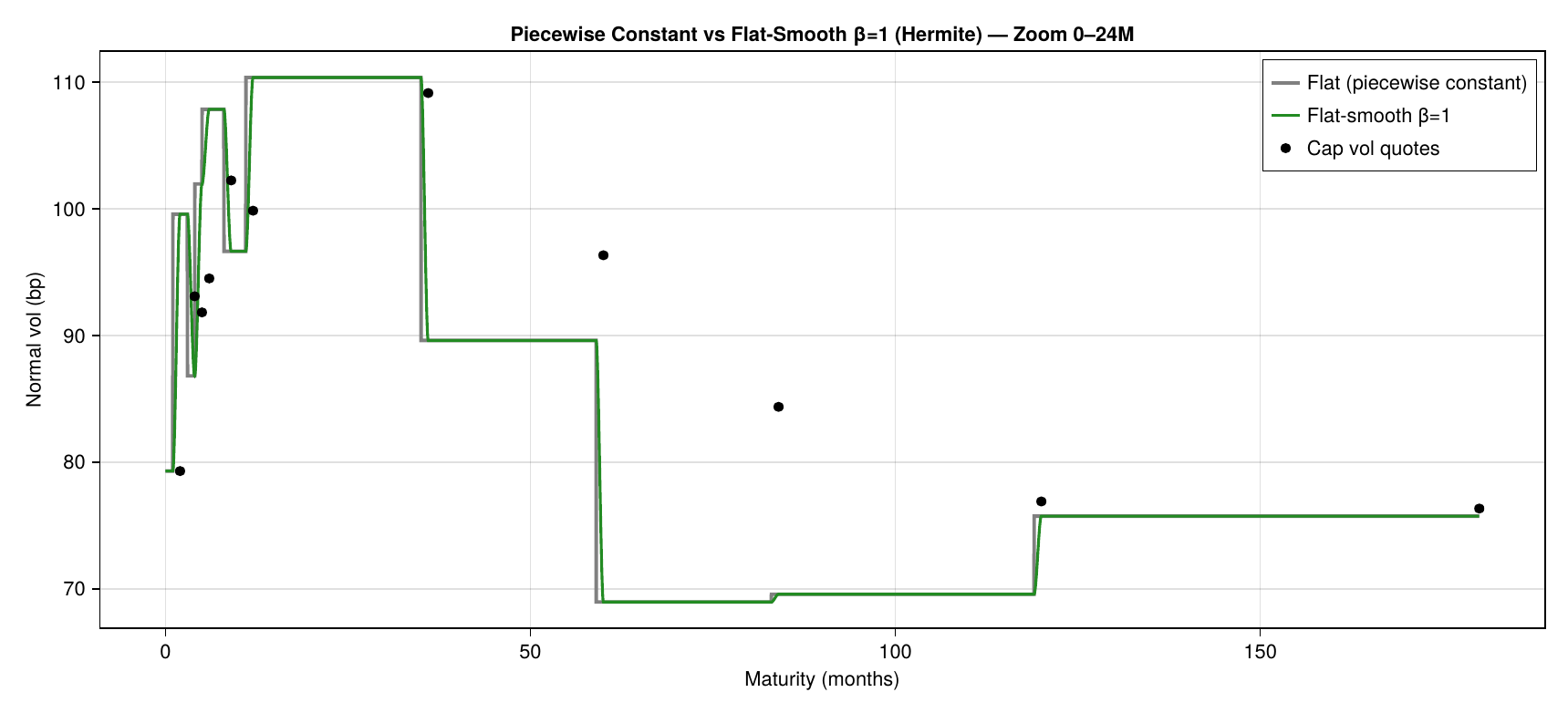}
\caption{Caplet volatilities for all maturities with flat-smooth interpolation with $\beta = 1$. \label{fig:flat_vs_flat_smooth}}
\end{figure}

\section{Smoother Interpolations}
These flat-linear and flat-smooth bootstrap methods are simple and a clear improvement, at least visually (Figures \ref{fig:flat_vs_flat_smooth_zoom} and \ref{fig:flat_vs_flat_smooth}), over the classic piecewise-constant interpolation. We propose to go further with a global curve at midpoints. The simple bootstrap can still be used to give a quick reasonable starting value.

\subsection{Midpoint Node Placement}
We warn about problems with node placement at maturity that is a common approach we have seen implemented: it may cause the solver to overshoot. 
To fix this, we instead consider two kinds of node sets:
\begin{itemize}
\item \emph{At-maturity} nodes set $\tau_k = T_k - \delta$.
\item \emph{Midpoint} nodes set $\tau_1 = T_1 - \delta$ and $\tau_k = (T_{k-1}+T_k)/2 - \delta$ for $k\ge 2$.
\end{itemize}

The introduction of midpoint nodes leads to a non-lower-triangular matrix, and thus requires a global solver to find the caplet volatilities. We use the Levenberg-Marquardt method with backtracking line search as global solver \citep{nocedal2006numerical}.

\begin{figure}[ht]
\centering
\includegraphics[width=0.95\textwidth]{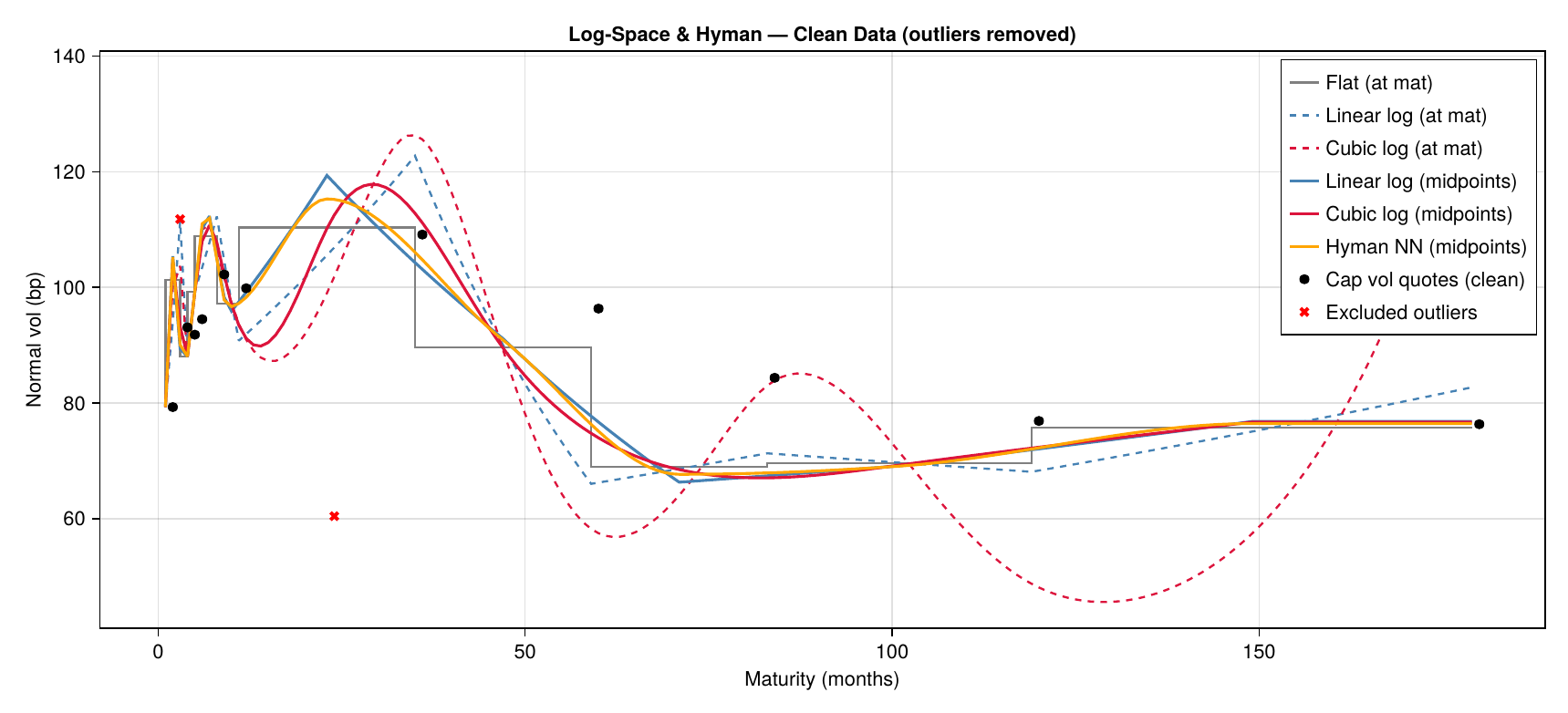}
\caption{Caplet volatilities obtained with outliers removed using various interpolation methods applied on mid-point or at-maturity nodes.}
\label{fig:clean_comparison_plot}
\end{figure}
When we apply these methods to our dataset with the previously identified outliers removed, the benefits of midpoint placement and smooth interpolation become clear. Figure \ref{fig:clean_comparison_plot} shows the resulting interpolated caplet volatilities across these different methods on the clean data. Midpoint node placement proves highly effective.

If we keep the outliers cap quotes, the effect of midpoint placement and positivity constraint is even more pronounced (Appendix \ref{sec:oscillations}). 

\subsection{Enforcing Positivity}
A priori, there is no guarantee that linear interpolation at midpoints will lead to positive caplet volatilities. We enforce positivity during optimization by using a change of variable $v_k = \exp(u_k)$ where $u_k$ is the optimization variable.

For cubic splines, \citet{dougherty1989nonnegativity} propose a method to enforce non-negativity of the spline by filtering the first derivatives. In our implementation, we use a $C^1$ cubic spline based on Bessel (parabolic) estimate of derivatives along with the non-negative and monotonicity preserving filter instead of a $C^2$ cubic spline. We pay attention to use the corrected version for the filter coefficients as described in \citep{chasethedevil2017hyman_typo} and reproduced in Appendix~\ref{app:hyman_typo}.

In Figure \ref{fig:clean_comparison_plot}, the non-negative Hyman interpolation is close to the linear midpoints but much smoother. The last mid-point is at 150M, after which the splines will revert to flat extrapolation. The boundary condition we use for the non-negative Hyman spline is thus a first derivative of zero at the last node.

\subsection{Visual Summary}
Figure \ref{fig:visual_summary} provides a visual summary of the relevant caplet stripping techniques presented so far: from the time-value interpolation using a monotonicity preserving Hyman filter on a $C^2$ cubic spline to the bootstrap method using a non-negative $C^1$ Hyman spline at mid-points.
\begin{figure}[ht]
\centering
\includegraphics[width=0.95\textwidth]{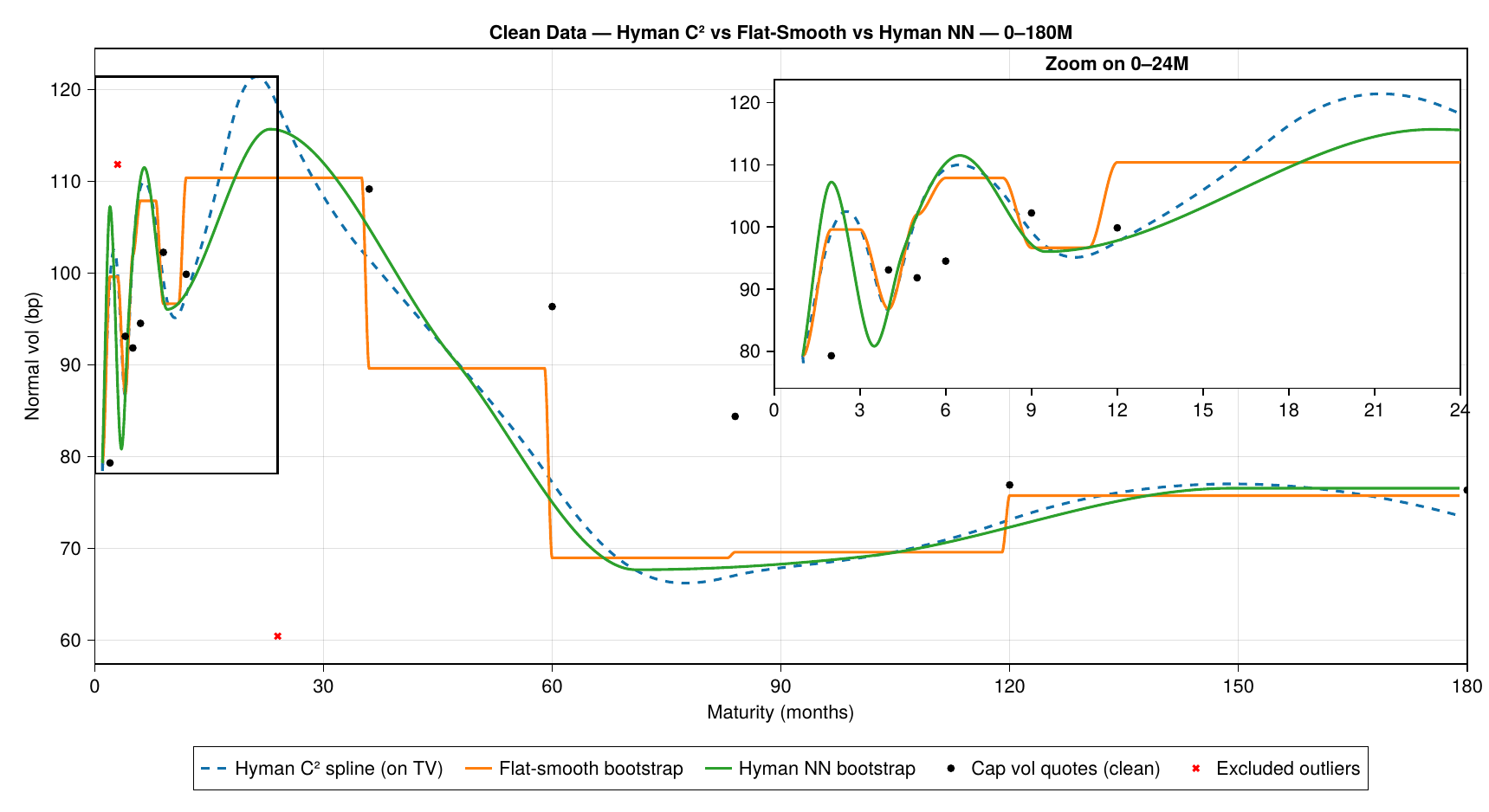}
\caption{Caplet volatilities obtained with outliers removed using various stripping methods.}
\label{fig:visual_summary}
\end{figure}

\section{Conclusion}
This paper has examined how interpolation, node placement, and quote quality affect exact or near-exact caplet stripping. The main finding is that large oscillations in stripped caplet volatilities are not an inevitable consequence of exact repricing, but arise mainly from how the caplet volatility curve is parameterised. In particular, interpolation at cap-maturity nodes can amplify local irregularities in the cap quotes and generate unrealistic caplet volatility shapes.

The paper shows that data quality must be checked before blaming the stripping method itself. Simple diagnostics based on the monotonicity of cap time values and robust outlier detection using MAD are effective at identifying problematic quotes. The paper then presented a direct non-bootstrap approach based on interpolating cap time values rather than cap prices, which enforces arbitrage-free caplet prices by design.

For bootstrap-based stripping, a pure piecewise-constant parameterisation is difficult to justify. Replacing it with the proposed flat-linear interpolation already yields a continuous caplet volatility curve while preserving the same bootstrap solution. For smoother constructions, midpoint node placement combined with a global solver is much better behaved. Among the methods tested here, the positivity-preserving midpoint approaches, especially the non-negative $C^1$ Hyman spline, provided the best compromise between smoothness, stability, and exact repricing.

\acknowledgments{The author thanks Gary Kennedy for contributing to some key ideas presented in this paper as well as for his insightful comments and suggestions on an earlier draft of this paper. }
\funding{This research received no external funding.}
\conflictsofinterest{The authors declare no conflict of interest.}
\externalbibliography{yes}
\bibliography{lefloch_caplet_stripping}
\appendixtitles{no}

\appendix

\section{Oscillations and Blow-ups}\label{sec:oscillations}
The resulting caplet volatilities obtained with linear interpolation and at-maturity nodes show strong oscillations, with a node near 18 months settling at $-159$\,bp. With midpoint nodes, the oscillations are much reduced, and the node near 18 months settles at $-14$\,bp.

\subsection{The Reason for the Oscillations}
With linear interpolation and at-maturity nodes $\tau_k = T_k$, the vol at caplet time $t_i \in (T_{k-1}, T_k]$ is
\[
\sigma(t_i) = v_{k-1}\,\frac{T_k - t_i}{T_k - T_{k-1}}
+ v_k\,\frac{t_i - T_{k-1}}{T_k - T_{k-1}}.
\]
The sequential bootstrap solves for $v_k$ given $(v_1,\dots,v_{k-1})$ already fixed. The incremental constraint involves caplets whose vols depend on $v_k$ with weight
\[
w_i = \frac{t_i - T_{k-1}}{T_k - T_{k-1}},
\qquad w_i \in \bigl[\tfrac{\Delta}{T_k-T_{k-1}},\; 1\bigr].
\]
Since $w_i$ increases linearly from $\approx 0$ to $1$, the average weight is $\bar w \approx 1/2$ which is roughly \textbf{half} the leverage of flat interpolation (where $w_i=1$). When the target incremental vol drops sharply (e.g.\ the 2Y dip), $v_k$ must \emph{overshoot} to compensate for the diluted sensitivity. In our numerical example, node values of $-175$\,bp emerge even though the target incremental vol is merely low, not negative.

\subsection{Why midpoints help}
With midpoint nodes $\tau_k = (T_{k-1}+T_k)/2$, the cap maturity $T_q$ falls near the right edge of interval $[\tau_q,\tau_{q+1}]$ rather than coinciding with a node. Crucially, the Jacobian $J_{qk} = \partial P_q/\partial v_k$ is no longer lower-triangular: each node~$v_k$ has substantial influence on at least two cap prices. The global Newton system distributes corrections bilaterally across nodes.

\subsection{What if we keep the outliers?}\label{sec:outliers_keep}
Figure \ref{fig:basic_interp_plot} vividly illustrates the issue. With linear interpolation at cap-maturity nodes, the caplet volatility curve exhibits large, highly unstable oscillations, diving deeply negative around the problematic 18-month maturity mark.

\begin{figure}[ht]
\centering
\includegraphics[width=0.95\textwidth]{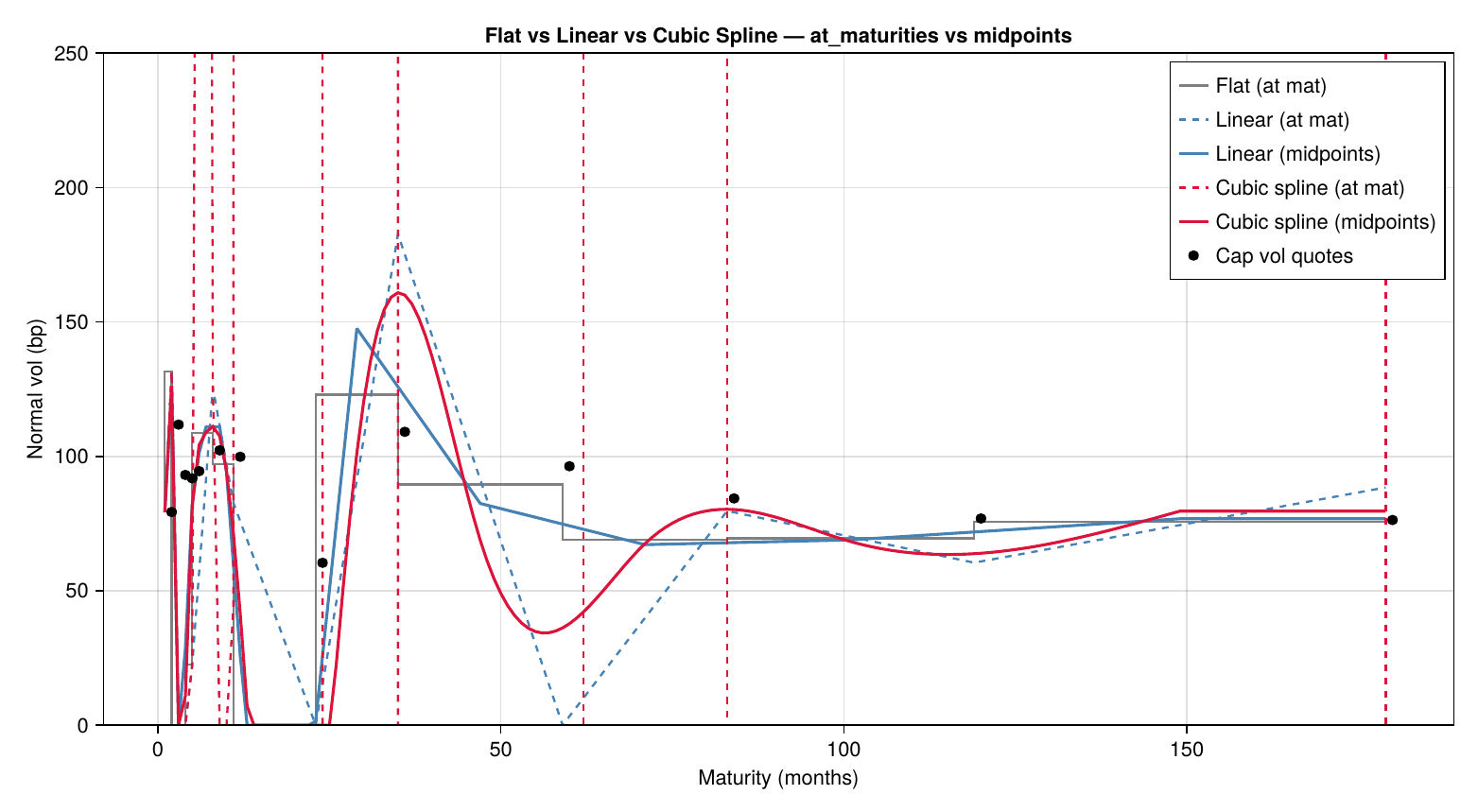}
\caption{Caplet volatilities obtained with linear interpolation at maturity nodes exhibit strong oscillations. \label{fig:basic_interp_plot}}
\end{figure}

\subsection{Global minimization with positivity constraint}
If we do not remove outliers, the bootstrap or global solver will try to fit the bad quotes, thus producing large oscillations and negative vols. Enforcing positivity with an exponential transform or a non-negative filter will help to mitigate the issue.
Figure \ref{fig:hyman_comparison_plot} shows that both approaches work. The $C^1$ cubic spline with non-negative filter does not present any oscillation and is smoother than linear interpolation with exponential transform.

\begin{figure}[ht]
\centering
\includegraphics[width=0.95\textwidth]{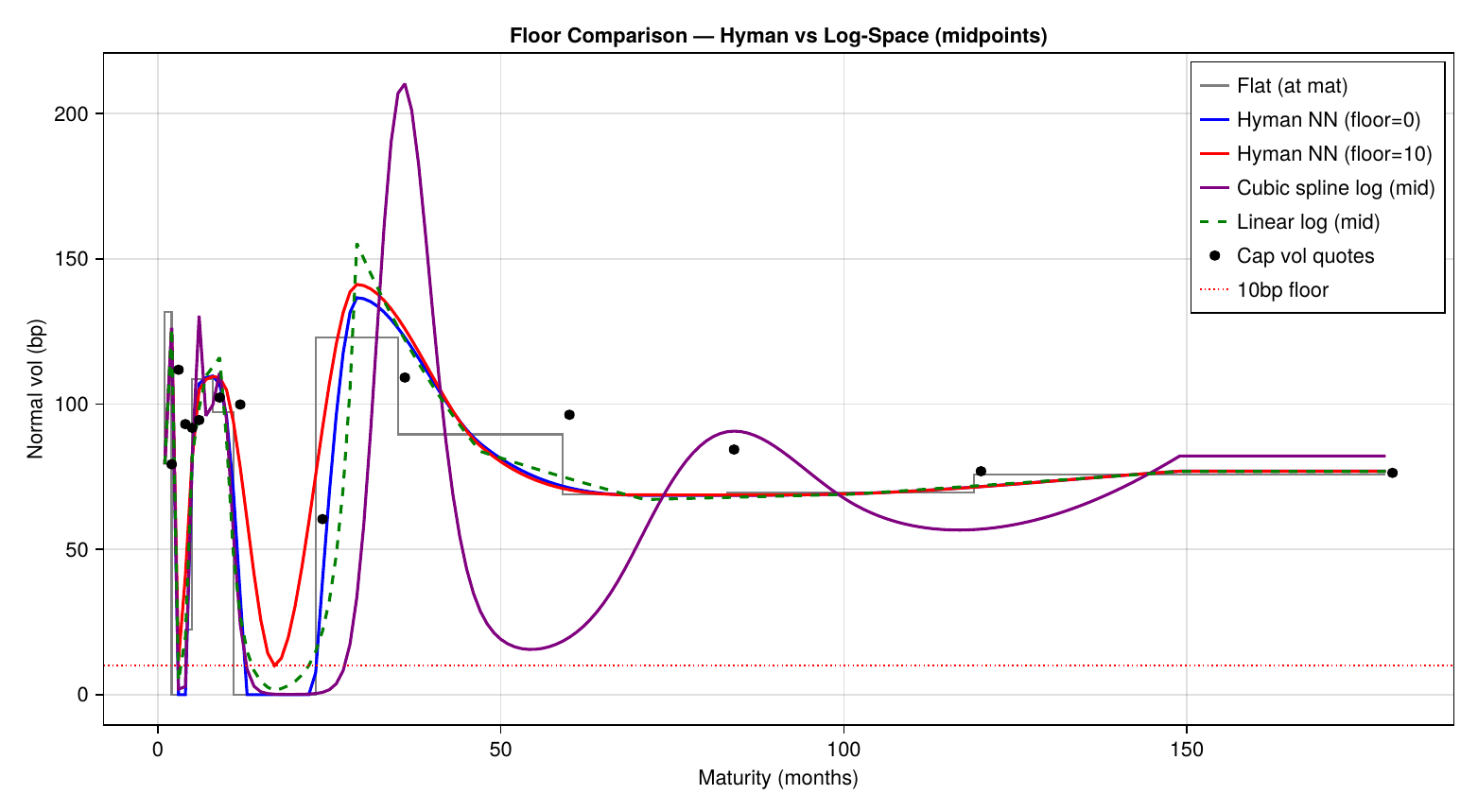}
\caption{Caplet volatilities obtained with non-negative constraint and midpoints nodes. \label{fig:hyman_comparison_plot}}
\end{figure}

\begin{table}[ht]
\caption{Cap repricing error for different interpolation methods.}
\centering
\begin{tabular}{l r r r}
\toprule
Method & Min vol (bp) & Min node & Reprice err \\
\midrule
flat at maturity & 0.0 & 0.0 & 5.04e-03\\
linear at maturity & 0.0 & 0.0 & 8.61e-03\\
cubic at maturity & 0.0 & 0.0 & 5.78e+05\\
linear mid & 0.0 & -144.56 & 3.70e-03\\
cubic mid & 0.0 & -112.02 & 3.70e-03\\
hyman mid & 0.0 & -121.46 & 3.70e-03\\
hyman mid floor=10 (shift) & 10.0 & 10.0 & 5.75e-03\\
linear exp mid & 1.42 & 1.42 & 3.70e-03\\
cubic exp mid & 0.09 & 0.18 & 3.70e-03\\
\bottomrule
\end{tabular}
\label{tbl:method_comparison}
\end{table}
\section{Market Data}\label{app:market_data}

\tablesize{\scriptsize}
\begin{table}[H]
\caption{Cap volatilities (in basis points) on Libor-1M as of February 2022 for a strike $K=0\%$. \label{tbl:cap_vols}}
\begin{tabular}{c|ccccccccccccc}
Maturity (months) & 2 & 3 & 4 & 5 & 6 & 9 & 12 & 24 & 36 & 60 & 84 & 120 & 180 \\
\hline
Cap Vol (bp) & 79.30 & 111.83 & 93.10 & 91.83 & 94.51 & 102.25 & 99.86 & 60.42 & 109.15 & 96.34 & 84.37 & 76.90 & 76.34 \\
\end{tabular}
\end{table}
\begin{table}[H]
\caption{USD OIS discount curve as of February 2022. \label{tbl:discount_curve}}
\begin{tabular}{c|cccccccccccccc}
Maturity (months) & 0 & 1 & 2 & 3 & 6 & 12 & 24 & 36 & 60 & 84 & 120 & 180 & 240 & 360 \\
\hline
Zero Rate (\%) & 0.05 & 0.06 & 0.20 & 0.35 & 0.69 & 1.01 & 1.44 & 1.62 & 1.71 & 1.81 & 1.83 & 2.04 & 2.25 & 2.17 \\
\end{tabular}
\end{table}
\begin{table}[H]
\caption{Libor 1M zero rates as of February 2022. \label{tbl:zero_curve}}
\begin{tabular}{c|cccccccccccccc}
Maturity (months) & 0 & 1 & 2 & 3 & 6 & 12 & 24 & 36 & 60 & 84 & 120 & 180 & 240 & 360 \\
\hline
Zero Rate (\%) &0.16 & 0.16 & 0.3 & 0.46 & 0.8 & 1.12 & 1.34 & 1.55 & 1.73 & 1.82 & 1.92 & 1.94 & 2.15 & 2.28\\
\end{tabular}
\end{table}

\section{Interest Curve Interpolation}\label{sec:forward_curve}
In the cap price or time-value interpolation approach, the interpolation of the discount curve and forward curve is also important. In particular, if we use a linear interpolation in the logarithm of discount factors (equivalently, a piecewise-constant interpolation for the forward curve), we will have a jump in the forward rate at the caplet maturities (Figure \ref{fig:forward_rate_comparison}). 
\begin{figure}[H]
\centering
\includegraphics[width=0.95\textwidth]{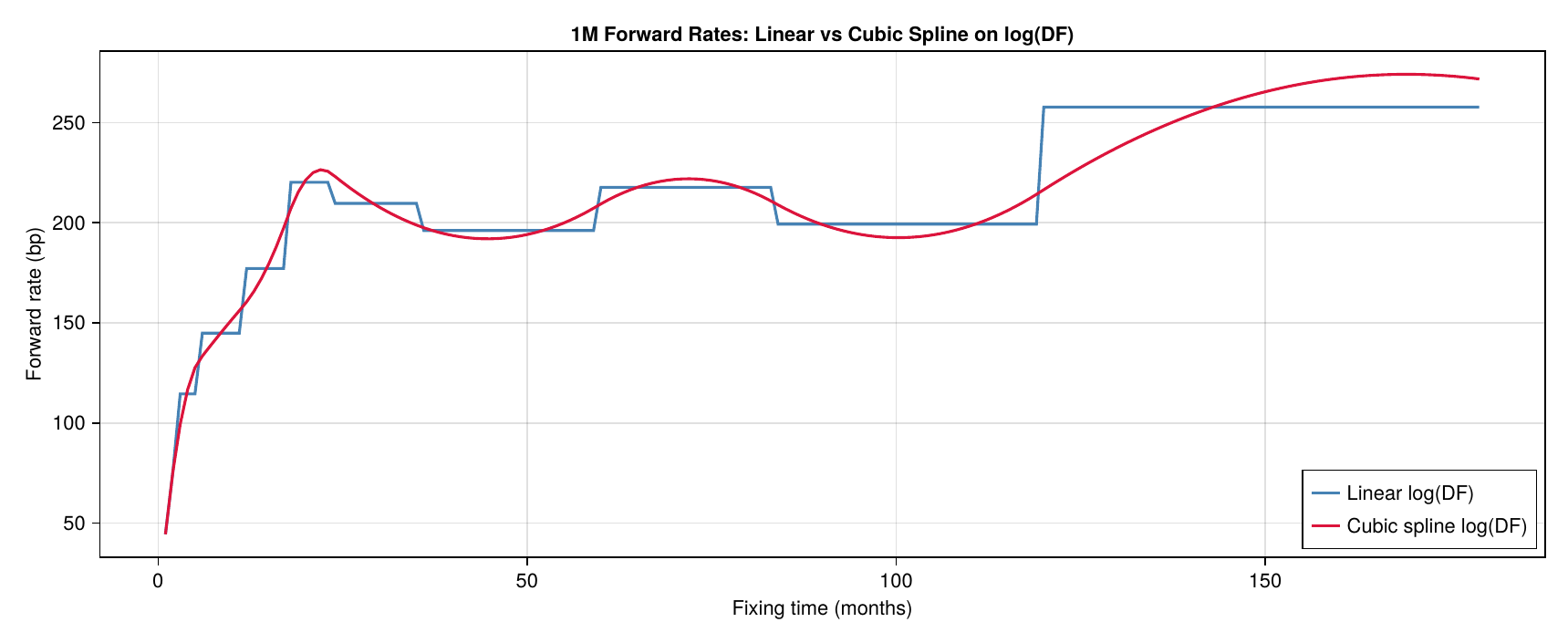}
\caption{Forward rates using linear interpolation  or a $C^2$ cubic spline on the logarithm of discount factors. \label{fig:forward_rate_comparison}}
\end{figure}
This will lead to jumps in the caplet volatilities even if the interpolator used for cap time-values is smooth (Figure \ref{fig:c2hyman_lin_vs_c2_clean}).

\begin{figure}[H]
\centering
\includegraphics[width=0.95\textwidth]{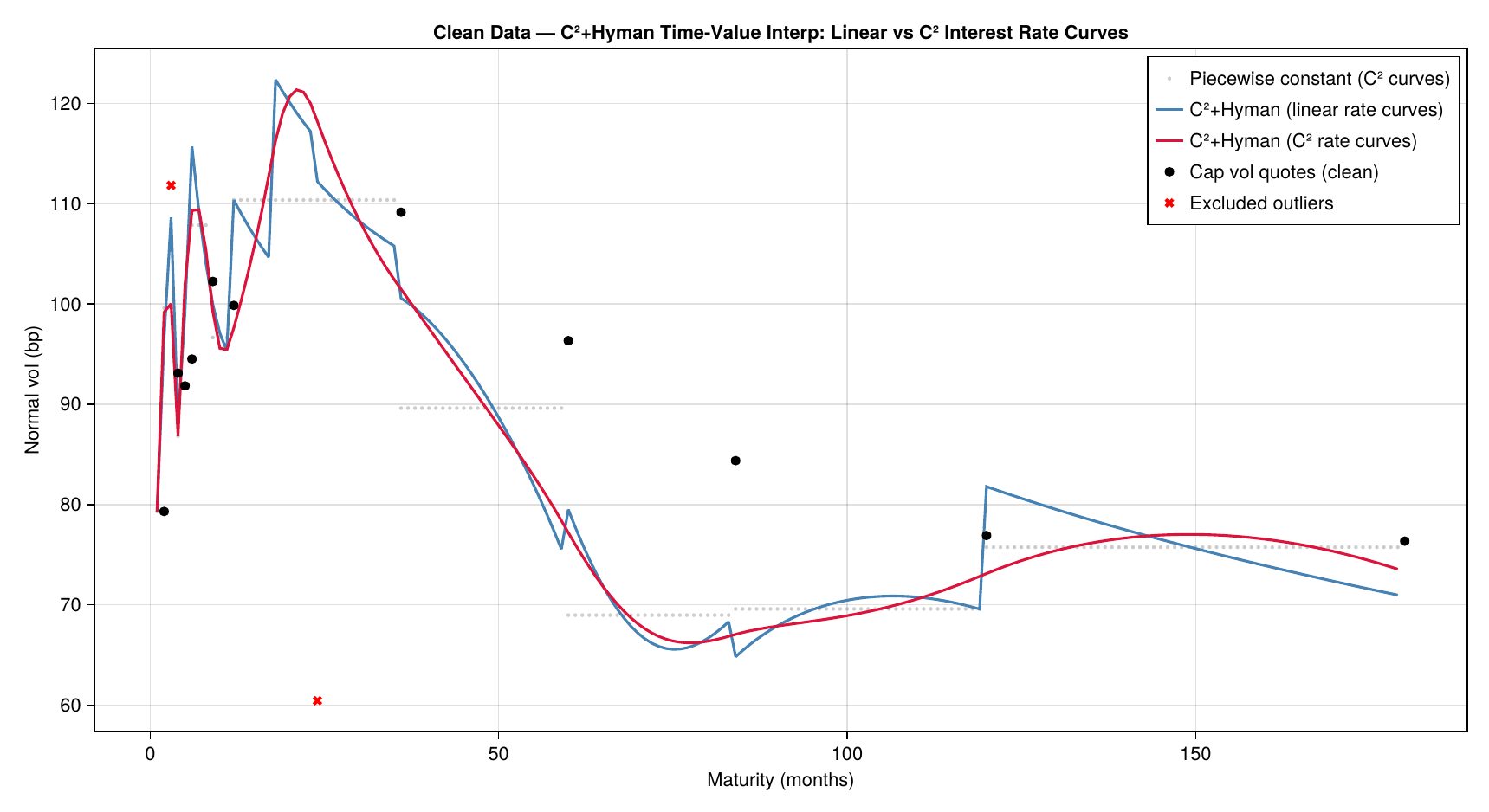}
\caption{Interest rate curve interpolation impact on the caplet volatilities. \label{fig:c2hyman_lin_vs_c2_clean}}
\end{figure}

\section{Stripping Without Boostrapping: Extrapolation}\label{sec:extrapolation}
Figure \ref{fig:extrap_usd_K0} shows the impact of the choice of fake cap quote date on the interpolation up to the last quoted tenor, for our example of 1M Libor cap volatilities. 
\begin{figure}[H]
\centering
\includegraphics[width=0.95\textwidth]{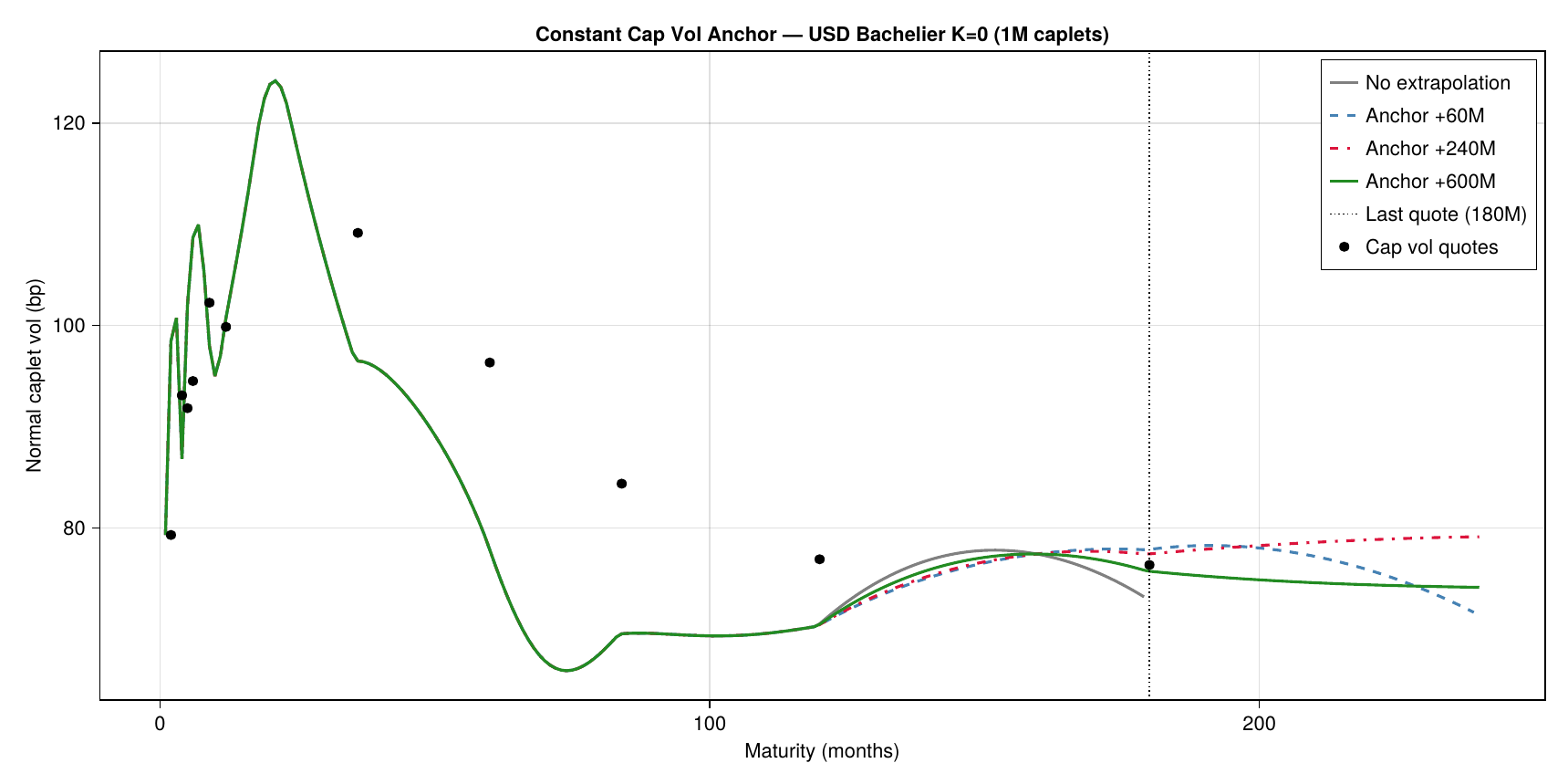}
\caption{Caplet vols adding a fake cap quotes far away (at different dates), with volatility equal to the last quoted maturity. \label{fig:extrap_usd_K0}}
\end{figure}

\section{Detailed Cap Price Tables}

\begin{table}[H]
  \caption{Cap prices and time values at $K=0$~bp.}
\label{tab:tv_K0}
\centering
\begin{tabular}{rr rrrr}
\toprule
$T$ (M) & $\bar\sigma$ (bp) & Cap Price & Intrinsic & Time Value \\
\midrule
  2 &  79.30 &     3.7266 &     3.7078 &   0.0188 \\
  3 & 111.83 &    10.1769 &     9.9959 &   0.1810 \\
  4 &  93.10 &    19.6077 &    19.5256 &   0.0821 \\
  5 &  91.83 &    29.1470 &    29.0471 &   0.0998 \\
  6 &  94.51 &    38.7387 &    38.5605 &   0.1782 \\
  9 & 102.25 &    75.1026 &    74.5777 &   0.5249 \\
 12 &  99.86 &   111.4764 &   110.4753 &   1.0011 \\
 24 &  60.42 &   305.1656 &   305.0634 &   0.1022 \\
 36 & 109.15 &   520.6022 &   506.5736 &  14.0286 \\
 60 &  96.34 &   909.9606 &   873.0928 &  36.8678 \\
 84 &  84.37 &  1315.1064 &  1264.2562 &  50.8503 \\
120 &  76.90 &  1871.2891 &  1775.9994 &  95.2898 \\
180 &  76.34 &  2972.7366 &  2784.7155 & 188.0211 \\
\bottomrule
\end{tabular}
\end{table}

\begin{table}[H]
  \caption{Incremental time values at $K=0$~bp.  Two violations (bold).}
\label{tab:dtv_K0}
\centering
\begin{tabular}{rr rrrr}
\toprule
$T$ (M) & $\bar\sigma$ (bp) & $\Delta P$ & $\Delta\IV$ & $\Delta\TV$ \\
\midrule
  2 &  79.30 &     3.7266 &     3.7078 &   $+$0.0188 \\
  3 & 111.83 &     6.4503 &     6.2881 &   $+$0.1622 \\
  4 &  93.10 &     9.4308 &     9.5297 &   $\mathbf{-0.0989}$ \\
  5 &  91.83 &     9.5392 &     9.5215 &   $+$0.0177 \\
  6 &  94.51 &     9.5917 &     9.5133 &   $+$0.0784 \\
  9 & 102.25 &    36.3639 &    36.0172 &   $+$0.3467 \\
 12 &  99.86 &    36.3738 &    35.8976 &   $+$0.4762 \\
 24 &  60.42 &   193.6892 &   194.5882 &   $\mathbf{-0.8990}$ \\
 36 & 109.15 &   215.4366 &   201.5101 &  $+$13.9264 \\
 60 &  96.34 &   389.3584 &   366.5192 &  $+$22.8392 \\
 84 &  84.37 &   405.1459 &   391.1634 &  $+$13.9825 \\
120 &  76.90 &   556.1827 &   511.7432 &  $+$44.4395 \\
180 &  76.34 &  1101.4475 &  1008.7162 &  $+$92.7313 \\
\bottomrule
\end{tabular}
\end{table}

\begin{table}[H]
  \caption{Cap prices and time values at $K=200$~bp.}
\label{tab:tv_K200}
\centering
\begin{tabular}{rr rrrr}
\toprule
$T$ (M) & $\bar\sigma$ (bp) & Cap Price & Intrinsic & Time Value \\
\midrule
  2 &  79.30 &     0.0000 &     0.0000 &    0.0000 \\
  3 & 111.83 &     0.0037 &     0.0000 &    0.0037 \\
  4 &  93.10 &     0.0510 &     0.0000 &    0.0510 \\
  5 &  91.83 &     0.1467 &     0.0000 &    0.1467 \\
  6 &  94.51 &     0.3552 &     0.0000 &    0.3552 \\
  9 & 102.25 &     3.2408 &     0.0000 &    3.2408 \\
 12 &  99.86 &     6.8309 &     0.0000 &    6.8309 \\
 24 &  60.42 &    30.7472 &     9.8271 &   20.9201 \\
 36 & 109.15 &   129.7639 &    19.0849 &  110.6791 \\
 60 &  96.34 &   252.9935 &    19.0849 &  233.9086 \\
 84 &  84.37 &   384.7332 &    50.7977 &  333.9356 \\
120 &  76.90 &   579.5457 &    50.7977 &  528.7480 \\
180 &  76.34 &  1117.2233 &   276.7270 &  840.4963 \\
\bottomrule
\end{tabular}
\end{table}

\begin{table}[H]
  \caption{Incremental time values at $K=200$~bp.  All $\Delta\TV \geq 0$.}
\label{tab:dtv_K200}
\centering
\begin{tabular}{rr rrrr}
\toprule
$T$ (M) & $\bar\sigma$ (bp) & $\Delta P$ & $\Delta\IV$ & $\Delta\TV$ \\
\midrule
  2 &  79.30 &     0.0000 &     0.0000 &    $+$0.0000 \\
  3 & 111.83 &     0.0037 &     0.0000 &    $+$0.0037 \\
  4 &  93.10 &     0.0473 &     0.0000 &    $+$0.0473 \\
  5 &  91.83 &     0.0958 &     0.0000 &    $+$0.0958 \\
  6 &  94.51 &     0.2085 &     0.0000 &    $+$0.2085 \\
  9 & 102.25 &     2.8855 &     0.0000 &    $+$2.8855 \\
 12 &  99.86 &     3.5901 &     0.0000 &    $+$3.5901 \\
 24 &  60.42 &    23.9164 &     9.8271 &   $+$14.0893 \\
 36 & 109.15 &    99.0167 &     9.2578 &   $+$89.7589 \\
 60 &  96.34 &   123.2296 &     0.0000 &  $+$123.2296 \\
 84 &  84.37 &   131.7397 &    31.7128 &  $+$100.0270 \\
120 &  76.90 &   194.8125 &     0.0000 &  $+$194.8125 \\
180 &  76.34 &   537.6776 &   225.9293 &  $+$311.7483 \\
\bottomrule
\end{tabular}
\end{table}

\section{Total Variance Check Algorithm}
Another ad-hoc criteria may be to compute the total variance of the cap $\hat\sigma_q^2 T_q$. In particular, we notice that it is not increasing around those two outliers. This does not mean that there is a calendar spread arbitrage opportunity, since caps of different maturities are not directly comparable as each caplet operates on a different forward rate. However, it is a strong indication that there may be an issue with the quotes, and that we should investigate further by looking at the outliers with the modified Z-score method.
   \begin{listing}[H]    
\caption{Total variance check for arbitrage detection in \emph{Julia} language.}\label{code:total_variance_check}
\begin{minted}[breaklines,escapeinside=||,mathescape=true, frame=lines, fontsize=\small, framesep=2mm]{julia}
cap_maturities_months = Float64.([2, 3, 4, 5, 6, 9, 12, 24, 36, 60, 84, 120, 180])
cap_vols_bp = [79.30, 111.83, 93.10, 91.83, 94.51, 102.25, 99.86, 60.42,
 109.15, 96.34, 84.37, 76.90, 76.34]

T = cap_maturities_months ./ 360
tv = (cap_vols_bp .^ 2) .* T
arb_violations = findall(diff(tv) .< 0)
println("\nTotal-variance arbitrage violations at indices: ", arb_violations)
\end{minted}
\end{listing}

\section{Outlier Detection Algorithm}
   \begin{listing}[H]    
\caption{Detect outliers in the cap volatilities using the MAD method in \emph{Julia} language.}\label{code:outlier}
\begin{minted}[breaklines,escapeinside=||,mathescape=true, frame=lines, fontsize=\small, framesep=2mm]{julia}
ref = [median(cap_vols_bp[max(1,i-2):min(end,i+2)]) for i in eachindex(cap_vols_bp)]
res = cap_vols_bp .- ref
mad_val = median(abs.(res .- median(res)))
mod_z = 0.6745 .* (res .- median(res)) ./ mad_val
outlier_idx = findall(abs.(mod_z) .> 3.0)
println("\nOutlier indices: ", outlier_idx)
\end{minted}
\end{listing}

% -------------------------------------------------------------------
% Appendix: Correction to the Hyman non-negative constraint (DEH)
% -------------------------------------------------------------------
\section{Correction to the Hyman Non-Negative Constraint}\label{app:hyman_typo}

\citet{dougherty1989nonnegativity} propose a practical filter on first derivatives of a cubic Hermite interpolant that guarantees non-negativity when all nodal function values are non-negative. A typographical error in the printed form of the non-negativity bounds was pointed out by \citet{chasethedevil2017hyman_typo}. The error concerns the placement of the adjacent interval lengths in the derivative bounds; the corrected inequalities are reproduced here for completeness and for reference when implementing positivity-preserving splines.

Let the nodes be $(x_k,f_k)$ for $k=1,\dots,n$ with $f_k\ge 0$ and denote the interval lengths by $h_k = x_{k+1}-x_k$ ($k=1,\dots,n-1$). Let $d_k$ be the cubic-Hermite derivative (i.e.\ $f'(x_k)$).

Before applying the positivity preserving filter it is customary to initialise the nodal derivatives either using $C^2$ continnuity conditions or with the parabolic (Bessel) estimate. We use the latter. Denote the secant slopes by
\[
s_k = \frac{f_{k+1}-f_k}{h_k},\qquad k=1,\dots,n-1.
\]
Then the Bessel (parabolic) estimate for interior nodes is
\begin{equation}\label{eq:bessel_appendix}
d_k^{\mathrm{Bessel}} = \frac{h_k\,s_{k-1} + h_{k-1}\,s_k}{h_{k-1} + h_k},\qquad k=2,\dots,n-1.
\end{equation}

Common endpoint choices are one-sided secant values $d_1=s_1$ and $d_n=s_{n-1}$, or other one-sided parabolic estimates. In our caplet-stripping implementation, the extrapolation is flat and we thus set the last derivative to zero (i.e.\ $d_n=0$) for maximum smoothness. The Bessel estimate is second-order accurate on smooth data and provides a balanced initial slope on non-uniform grids, after which the  clamp is applied to guarantee non-negativity.

A sufficient condition that the cubic Hermite interpolant is non-negative on the two intervals adjacent to $x_k$ is
\[
-\frac{3\,f_k}{h_k} \;\le\; d_k \;\le\; \frac{3\,f_k}{h_{k-1}},\qquad f_k>0,
\]
where the left-hand inequality follows from enforcing non-negativity on $[x_k,x_{k+1}]$ and the right-hand inequality from $[x_{k-1},x_k]$. When $f_k=0$ the derivative should be set to zero to avoid division by zero and to prevent artificial sign changes.
In algorithmic form the commonly used clamp (filter) can be written as:
\begin{verbatim}
for k in 1:n
  if f[k] <= 0
    d[k] = 0.0
    continue
  end
  if k > 1
    d[k] = min(d[k], 3.0 * f[k] / h[k-1])
  end
  if k < n
    d[k] = max(d[k], -3.0 * f[k] / h[k])
  end
end
\end{verbatim}

Using the corrected bounds prevents the filter from applying overly permissive or overly restrictive clamps that would otherwise allow negative undershoots or unnecessarily flatten the spline. We recommend testing the implementation on representative datasets and reporting any remaining negative excursions as data issues rather than as failures of the interpolator.

\end{document}